\newtheorem{assumption}{Assumption}
\newtheorem{proposition}{Proposition}
\newtheorem{lemma}{Lemma}
\def\BibTeX{{\rm B\kern-.05em{\sc i\kern-.025em b}\kern-.08em
    T\kern-.1667em\lower.7ex\hbox{E}\kern-.125emX}}
\begin{document}

\title{\huge{Security Investment Over Networks with Bounded Rational Agents: Analysis and Distributed Algorithm}
}

\author{Jason Hughes$^1$\thanks{$^1$is with the United States Military Academy Robotics Research Center, West Point, NY 10996 USA. E-Mail: jason.hughes@westpoint.edu} and Juntao Chen$^2$\thanks{$^2$is with the Fordham University Department of Computer and Information Sciences, New York, NY 10023 USA. E-mail: jchen504@fordham.edu}
}

\maketitle

\begin{abstract}
This paper considers the security investment problem over a network in which the resource owners aim to allocate their constrained security resources to heterogeneous targets strategically. Investing in each target makes it less vulnerable, and thus lowering its probability of a successful attack. However, humans tend to perceive such probabilities inaccurately yielding bounded rational behaviors; a phenomenon frequently observed in their decision-making when facing uncertainties. We capture this human nature through the lens of cumulative prospect theory and establish a behavioral resource allocation framework to account for the human's misperception in security investment. We analyze how this misperception behavior affects the resource allocation plan by comparing it with the accurate perception counterpart. The network can become highly complex with a large number of participating agents. To this end,  we further develop a fully distributed algorithm to compute the behavioral security investment strategy efficiently. Finally, we corroborate our results and illustrate the impacts of human's bounded rationality on the resource allocation scheme using cases studies.
\end{abstract}


\section{Introduction}


An efficient allocation of limited security resources to protect targeted assets from malicious attacks is a critical problem faced by security professionals. This problem becomes increasingly challenging as the modern systems adopted in our society become more complex. For example, the societal cyber-physical systems, such as industrial control systems and power grids, consist of heterogeneous components including sensors, controllers, and actuators, which are required to be jointly secured to achieve a desired performance. Thus, it is important for the system operator to allocate the available security resources strategically to enhance the holistic security. Previous studies have investigated how a centralized system operator can maximally reduce the vulnerability of assets from adversaries through security investment \cite{yue2007network,su2017security}. However, such a centralized paradigm, i.e., using a single-source model, is insufficient to capture the emerging scenarios where multiple resource owners\footnote{The resource owner also refers to the system operator/planner, and they are used interchangeably in the paper.} participate in securing the targets collaboratively. To this end, this paper aims to develop a framework and investigate the security investment problem over a network with multiple sources (security resource investors) and a variety of targets (valuable assets to be protected). 

To develop an effective security investment scheme, it is necessary to understand how the risks of targets change over the investment strategy. 
A larger security investment amount will lower the probability of a successful attack. However, human's perception of such probabilistic events is subjective, a commonly observed behavior when facing uncertainties. Psychological studies have shown that humans often misperceive probabilities on gains and losses by over-weighting low probabilities and under-weighting high probabilities which lead to bounded rational behavior, a subject receiving significant attention in prospect theory \cite{Kahneman1979PT,tversky1992cpt}. Such behavioral misperception plays an essential role in the focused security investment problem where the resource owners need to evaluate the likelihood of successful compromise of the targets under a given resource allocation scheme. To this end, we incorporate this bounded rational consideration into our model by developing a new behavioral decision-making framework for security investment over networks. Under this paradigm, the resource owner perceives the target having a relatively low chance of being compromised as more vulnerable than it is, and a target with a high probability of being attacked as less vulnerable than it is. We analyze the impact of attack success misperception on the optimal resource allocation strategy and identify that the bounded rational operator will prefer more to secure those targets with higher values. In other words, the investors tend to pay more attention to higher-valued assets as they become more behavioral, yielding a discriminative distribution scheme comparing with the one without behavioral consideration.

Additionally, when more and more participating agents/nodes (resource owners and targets) are introduced into the network, the computational complexity of the resource allocation problem increases drastically \cite{zhang2019consensus}. Solving the security investment problem with a massive number of networked sources and targets in a centralized manner may be impractical or extremely computationally expensive. In addition, the centralized approach requires the planner to have complete information on the source and target agents, including their utility parameters, supply and demand upper bounds, degree of misperception on attack success, and value of targets. Thus, it does not preserve a high level of privacy for participants. To this end, we propose a distributed algorithm based on the alternating direction method of multipliers (ADMM) \cite{boyd2011distributed}, where the central resource transport planner is not needed and each node solves its own simpler optimization problem and communicates its decisions with the connected nodes in the network. Each pair of source and target nodes will then negotiate to reach a consensus on how many security resources should be transported. The proposed distributed algorithm converges to the same optimal solution obtained under the centralized optimization paradigm. 

The contributions of this paper are summarized as follows:
\begin{enumerate}
    \item We develop a bounded rational security investment framework over a network. The model captures the decision-makers' misperception of security resources' effectiveness on protecting the targets, and it facilitates the analysis of behavioral impacts on the resource allocation plan.
    \item We discover a sequential water-filling nature of the optimal security resource allocation over the targets and identify that the transport planners become more discriminative by investing in a smaller set of higher-valued targets as they tend to be more behavioral.
    \item We further develop a distributed algorithm based on ADMM to compute the optimal resource investment strategy for large-scale networks. We also corroborate our algorithm and analytical results using case studies.
\end{enumerate}

\textit{Related Works:} Optimal security investment for defending assets has been studied extensively in literature \cite{azaiez2007optimalRA,khalili2020resource,huang2019adaptive,xing2021security}. Previous studies have also considered the behavioral impacts on security investment. For example, the authors in \cite{yang2013improvingRA} have studied the interplay between the strategic defender and the bounded rational adversary through a game-theoretic framework. \cite{mustafa2019protecting} has investigated the optimal investment strategies under a misperceived security risk model based on prospect theory, in which the authors have focused on a single decision maker investing on heterogeneous assets. Our work pays attention to the resource transport in a multi-source multi-target framework. 
Prospect theory has also been used to guide the optimal resource allocation/decision-making in various applications, including water infrastructures \cite{he2019water}, communication networks \cite{vamvakas2019exploiting}, and the Internet of things \cite{tian2021honeypot}. 
Our work is also related to the decision-making of resource allocation over large-scale networks, in which efficient algorithms for computing the optimal schemes have been proposed in different contexts, including in consideration of efficiency \cite{zhang2019consensus,xiong2018optimal}, fairness \cite{jhughes2021fair}, and security and resiliency \cite{hughes2021deceptive,farooq2018adaptive}.

The rest of this paper is organized as follows. In Section II, we formulate the security investment problem over a network that considers human misperception on attack success rate. We characterize the optimal security investment strategy and analyze how the behavioral consideration affects such solutions in Sections III and IV. We further develop a distributed algorithm to compute the behavioral security investment strategy in Section V and corroborate our findings in Section VI.

\section{Problem Formulation}
In this section, we establish a framework for security investment over a network with behavioral participants.

\subsection{Security Resource Transport Network}
In a network, we denote by $\mathcal{X}:=\{1, ..., |\mathcal{X}|\}$ the set of destinations/targets that receive the security resources, and $\mathcal{Y}:=\{1, ..., |\mathcal{Y}|\}$ the set of origins/sources that distribute security resources to the targets. Specifically, each source node $y\in\mathcal{Y}$ is connected to a number of target nodes denoted by $\mathcal{X}_y$, representing that $y$ has choices in allocating its resources to a specific group of destinations $\mathcal{X}_y$ in the network. Similarly, it is possible that each target node $x\in\mathcal{X}$ receives resources from multiple source nodes, and this set of suppliers to node $x$ is denoted by $\mathcal{Y}_x$. Note that $\mathcal{X}_y$, $\forall y$ and $\mathcal{Y}_x$, $\forall x$ are nonempty. It is straightforward to see that the security resources are transported over a bipartite network, where one side of the network consists of all source nodes and the other includes all destination nodes. This bipartite graph may not be complete due to constrained matching policies between participants. For convenience, we denote by $\mathcal{E}$ the set including all feasible transport paths in the network, i.e., $\mathcal{E}:=\{(x,y)|x\in\mathcal{X}_y,y\in\mathcal{Y}\}$. Note that $\mathcal{E}$ also refers to the set of all edges in the established bipartite graph for security resource transportation.

We next denote by $\pi_{xy}\in\mathbb{R}_+$ the amount of security resources transported from the origin node $y\in\mathcal{Y}$ to the destination node $x\in\mathcal{X}$, where $\mathbb{R}_+$ is the set of nonnegative real numbers. Let $\Pi:=\{\pi_{xy}\}_{x\in\mathcal{X}_y,y\in\mathcal{Y}}$ be the designed resource transport plan. Furthermore, the security resources at each source node $y\in\mathcal{Y}$ is upper bounded by $\bar{q}_{y}\in\mathbb{R}_+$, i.e., $\sum_{x\in\mathcal{X}_y}\pi_{xy}\leq \bar{q}_{y}$.

\subsection{Bounded Rational Security Investment} 
Each target node in the network faces threats and could be compromised by an attacker. If target node $x\in\mathcal{X}$ is attacked, the induced loss is $U_x>0$. The attacker's probability of successfully compromising the target node is related to the amount of security resources received. For each target node $x\in\mathcal{X}$, defined by $p_x:\mathbb{R}_+^{|\mathcal{Y}_x|}\rightarrow [0,1]$ a function that maps the received security resources $\Pi_x$ to a successful attack probability. It is natural to see that such probability should be related to the aggregated resource received by node $x$ captured by $\sum_{y\in\mathcal{Y}_x}\pi_{xy}$. Thus, with a slightly abuse of notation, $p_x(\Pi_x)$ can be expressed by $p_x(\sum_{y\in\mathcal{Y}_x}\pi_{xy})$, where the later one shows more explicitly the relationship between the successful attack probability and the total received resources at node $x\in\mathcal{X}$. 

Each target node $x\in\mathcal{X}$ minimizes its cost $U_x p_x(\Pi_x)$. To this end, the central planner aims to minimize the following aggregated loss $L(\Pi)$ at all targets under attacks:
\begin{equation} \label{eqn:loss_func}
    L(\Pi) = \sum_{x\in\mathcal{X}} U_x p_x(\Pi_x).
\end{equation}

It has been shown that humans tend to misperceive probabilities by over-weighing low probabilities and under-weighing high probabilities during decision-making under uncertainties. For a  true probability $p\in[0,1]$, humans will perceive it as $w(p)\in[0,1]$, where $w$ is a probability weighting function. One such commonly used weighting function is given by \cite{prelec1998}
\begin{equation} \label{eqn:weighting_func}
    w(p) = \exp({-(-\log(p))^\gamma}),\ p\in[0,1],
\end{equation}
where $\gamma\in(0,1]$ is a parameter capturing the degree of misperception. When $\gamma$ is closer to 0, it leads to a larger distortion of the probability function $p$. In comparison, when $\gamma=1$, $w(p) = p$, indicating there is no probability misperception.

Under the perceived probability, the target node $x$'s cost function becomes $U_x w\left(p_x(\Pi_x)\right)$. Thus, the cost function for the transport planner under the perceived attack probability is
\begin{equation} \label{eqn:weighted_loss}
    \tilde{L}(\Pi) = \sum_{x\in\mathcal{X}} U_x w(p_x(\Pi_x)).
\end{equation}

The security resource allocation strategies with bounded rational behavioral consideration can be obtained by solving the following optimization problem:
\begin{equation}\label{eqn:weighted}
\begin{aligned}
\mathrm{(OP-A):}\quad \min_{\Pi}\ &\sum_{x\in\mathcal{X}} U_x w(p_x(\Pi_x))\\
    \mathrm{s.t.}\ 
    &0\leq \sum_{x\in\mathcal{X}_y} \pi_{xy}\leq \bar{q}_{y},\ \forall y\in\mathcal{Y},\\
    &\pi_{xy}\geq 0,\ \forall \{x,y\} \in\mathcal{E}.
\end{aligned}
\end{equation}

Note that (OP-A) is solved for a distribution of resources across the source nodes at a given time. If the amount of resources at each node changes, a new optimal strategy can be obtained by solving (OP-A) repeatedly in a moving horizon fashion. Extending (OP-A) to dynamic resource allocation over a period a time is possible and left as subsequent work.

\section{Preliminary Analysis}

The successful attack probability function $p_x$ should capture the fact that a larger security investment lowers the likelihood of attack. In addition, the marginal benefit of security resource decreases for each target node. 
To this end, we have the following assumption.
\begin{assumption}\label{assump:1}
The successful attack probability function $p_x(\Pi_x)$ satisfies the following: 1) $p_x(\Pi_x)\in[0,1]$ with $\lim_{||\zeta||_1\rightarrow \infty} p_x(\zeta)$ = 0, where $||\cdot||_1$ denotes the $l_1$ norm, and is twice differentiable, 2) $p_x(\Pi_x)$ is strictly monotonic decreasing and log-convex with respect to $\pi_{xy}$, for $y\in\mathcal{Y}_x$, and 3) $\frac{\partial p_x}{\partial \pi_{xy}}/p_x$ is bounded with respect to $\pi_{xy}$, for $y\in\mathcal{Y}_x$.
\end{assumption}

There are a number of functions of interest that satisfy the
properties in Assumption \ref{assump:1}. For example,
\begin{equation} \label{eqn:prelec}
    p_x(\Pi_x) = \exp({-\sum_{y\in\mathcal{Y}_x}\pi_{xy} - r_x}),
\end{equation}
where $r_x>0$ represents the existing security investment at node $x$ before resource transport.

As another example, $p_x(\Pi_x) = \frac{1}{\sum_{y\in\mathcal{Y}_x}\pi_{xy} + r_x}$, where $r_x>1$ has a similar meaning as in the previous case. Both examples indicate that the security resources can effectively decrease the attack likelihood.

\begin{lemma}\label{lemma:convex}
Under Assumption \ref{assump:1}, the perceived probability of successful attack at node $x$, $w(p_x(\Pi_x))$, is strictly convex in $\pi_{xy}$, $\forall x\in\mathcal{X},\ y\in\mathcal{Y}_x$.
\end{lemma}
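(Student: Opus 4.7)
The plan is to factor the perceived probability as a composition $h(\pi_{xy}) := g(u(\pi_{xy}))$ with $u := -\log p_x$ and $g(s) := \exp(-s^{\gamma})$, and then to establish strict convexity via the one-variable chain rule applied in the direction $\pi_{xy}$ with the remaining coordinates of $\Pi_x$ held fixed (convexity in each coordinate reduces to this one-dimensional computation because only one variable is varied at a time in the statement).

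First I would analyze the inner map. By Assumption \ref{assump:1}(2), $p_x$ is log-convex in $\pi_{xy}$, so $\log p_x$ is convex and $u = -\log p_x$ is concave in $\pi_{xy}$; strict monotonic decrease of $p_x$ together with $p_x > 0$ further yields $\partial u/\partial \pi_{xy} = -p_x^{-1}\,\partial p_x/\partial \pi_{xy} > 0$. Second I would analyze the outer map $g(s) = \exp(-s^{\gamma})$ on $s > 0$. Direct differentiation gives $g'(s) = -\gamma\, s^{\gamma-1}\, e^{-s^{\gamma}} < 0$ and
\[ g''(s) \;=\; \gamma\, s^{\gamma-2}\, e^{-s^{\gamma}} \bigl[\,(1-\gamma) + \gamma\, s^{\gamma}\,\bigr] \;>\; 0 \]
for every $\gamma \in (0,1]$ and every $s > 0$, so $g$ is strictly decreasing and strictly convex on $(0,\infty)$.

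Third I would combine the two pieces by the chain rule,
\[ \frac{\partial^{2} h}{\partial \pi_{xy}^{2}} \;=\; g''(u)\Bigl(\frac{\partial u}{\partial \pi_{xy}}\Bigr)^{\!2} \;+\; g'(u)\,\frac{\partial^{2} u}{\partial \pi_{xy}^{2}}. \]
The first summand is strictly positive since $g'' > 0$ and $\partial u/\partial \pi_{xy} \neq 0$; the second is nonnegative since $g' < 0$ and $\partial^{2} u/\partial \pi_{xy}^{2} \le 0$ from the concavity step. Hence $\partial^{2} h/\partial \pi_{xy}^{2} > 0$, which is exactly the desired strict convexity in $\pi_{xy}$.

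The main obstacle I foresee is that the concavity of $u$ can be only weak; indeed for the exponential example \eqref{eqn:prelec} we have $u$ linear in $\pi_{xy}$, so the second chain-rule summand vanishes identically and strict convexity must come entirely from the first summand---this is precisely where the hypothesis $\gamma \in (0,1]$ becomes indispensable through the strict positivity of $g''$. A minor boundary issue is $u \to 0^{+}$ (i.e., $p_x \to 1$), where $g''$ diverges for $\gamma < 1$; this does not arise in the concrete models because the baselines $r_x$ in both examples keep $p_x$ strictly below $1$, so the argument is carried out on the relevant open domain and extended to the closure by continuity where needed.
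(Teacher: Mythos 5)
Your proof is correct and rests on the same underlying computation as the paper's: both verify directly that $\partial^{2} w(p_x(\Pi_x))/\partial \pi_{xy}^{2}>0$. The difference is organizational, and it works in your favor. The paper expands the second derivative into three terms and then argues somewhat loosely about the sign of the middle one (``with cancellation and some algebraic manipulation, it is easy to see that the statement is true''), whereas your factorization through $u=-\log p_x$ and $g(s)=e^{-s^{\gamma}}$ isolates exactly where each hypothesis enters: $g''>0$ on $(0,\infty)$ makes the first chain-rule summand $g''(u)(u')^{2}$ strictly positive (using $u'=-p_x^{-1}\partial p_x/\partial\pi_{xy}>0$ from strict monotonicity), and log-convexity of $p_x$ (i.e., $u''\le 0$) together with $g'<0$ makes the second summand $g'(u)u''$ nonnegative. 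In fact your two summands are term-by-term the same as the paper's three (its first and third terms sum to $g''(u)(u')^{2}$ and its second term equals $g'(u)u''$), so nothing new is being proved, but the role of log-convexity---which the paper's appendix never explicitly invokes when handling its second term---is made transparent, and your observation that strictness must come entirely from $g''>0$ when $u$ is affine (as in the example \eqref{eqn:prelec}) correctly identifies the crux. Your caveat about the boundary $p_x\to 1$, where $g''$ diverges for $\gamma<1$, is legitimate and not addressed by the paper; it is harmless since the argument only needs to run on the region where $0<p_x<1$.
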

\begin{proof}
See Appendix \ref{lemma_proof:convex}.
\end{proof}

\section{Analysis of Bounded Rational Security Investment Strategies}

This section characterizes the bounded rational security investment strategies and analyzes the impacts of behavioral considerations on such decision-making outcomes. 

\subsection{Security Resource Allocation Preferences}
We first have the following assumption to facilitates the analysis.
\begin{assumption}\label{assump:2}
Assume that the values of induced loss due to successful attack are ordered as follows: $U_1>U_2>...>U_{|\mathcal{X}|}>0$. Furthermore, each target node admits a same successful attack probability function, i.e., $p_x$, $\forall x\in\mathcal{X}$, share a same form.
\end{assumption}

We next have the following result on the marginal cost associated with the target nodes.

\begin{lemma}\label{lemma:2}
The following inequality holds for each pair of target nodes $i\in\mathcal{X}$ and $j\in\mathcal{X}$ with $i<j$,
\begin{equation}\label{eqn:marginal}
U_{i} \frac{\partial w(p_i(\Pi_i))}{\partial \pi_{iy}}<U_j\frac{\partial w(p_j(\Pi_j))}{\partial \pi_{jy}},\ \forall y\in\mathcal{Y}_i\cap \mathcal{Y}_j.
\end{equation}
And the marginal cost $U_{i} \frac{\partial w(p_i(\Pi_i))}{\partial \pi_{iy}}$ is negative and continuously increasing to 0 in $\pi_{iy}$, $\forall i\in\mathcal{X}$.
\end{lemma}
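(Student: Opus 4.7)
The plan is to establish the two claims in sequence, handling the sign/monotonicity and the limiting behavior first, and then deriving the strict inequality as a short corollary under Assumption \ref{assump:2}. I begin by applying the chain rule to obtain $\partial w(p_x(\Pi_x))/\partial \pi_{xy} = w'(p_x)\,\partial p_x/\partial \pi_{xy}$. The Prelec weighting \eqref{eqn:weighting_func} satisfies $w'(p) = w(p)\,\gamma(-\log p)^{\gamma-1}/p > 0$ on $(0,1)$, while Assumption \ref{assump:1}(2) gives $\partial p_x/\partial \pi_{xy} < 0$. Multiplying by $U_x>0$, the marginal cost is strictly negative. Strict convexity of $w(p_x(\Pi_x))$ in $\pi_{xy}$ from Lemma \ref{lemma:convex} immediately yields strict monotonic increase of this partial derivative in $\pi_{xy}$, and continuity follows from twice-differentiability of $p_x$ (Assumption \ref{assump:1}(1)) combined with smoothness of $w$.

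Next I would verify the limit. As $\pi_{iy}\to\infty$, the $\ell_1$-norm of the corresponding argument diverges, so by Assumption \ref{assump:1}(1) we have $p_i\to 0$. Writing
\[
U_i\,\frac{\partial w(p_i)}{\partial \pi_{iy}} \;=\; U_i\, w'(p_i)\,p_i \cdot \frac{1}{p_i}\frac{\partial p_i}{\partial \pi_{iy}},
\]
the rightmost factor stays bounded by Assumption \ref{assump:1}(3), and for the Prelec weighting, $w'(p)\,p = \gamma\,(-\log p)^{\gamma-1}\,w(p)\to 0$ as $p\to 0$ for every $\gamma\in(0,1]$ (when $\gamma<1$ the polylog factor vanishes; when $\gamma=1$, $w(p)=p\to 0$). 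Together with strict monotonicity from the first step, this shows the marginal cost increases continuously to $0$ from below.

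Finally, for the strict inequality \eqref{eqn:marginal}, I would invoke Assumption \ref{assump:2} that all $p_x$ share a common functional form, so the mapping $\pi \mapsto \partial w(p_x)/\partial \pi_{xy}$ is the \emph{same} negative-valued function of the allocation vector across all $x$. Comparing at a common argument therefore gives equal derivatives, and multiplying by $U_i > U_j > 0$ on both sides preserves the ordering while the common negative factor flips it, yielding the claimed strict inequality.

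The \textbf{main obstacle} is the limit step: one must rule out that the marginal cost approaches a strictly negative constant, and this requires combining three pieces — asymptotic vanishing of $p_x$ (Assumption \ref{assump:1}(1)), the bounded logarithmic-derivative condition (Assumption \ref{assump:1}(3)), and the precise decay of $w'(p)\,p$ at $p=0$ for the Prelec family. A secondary clarification needed when writing out the proof is the scope of \eqref{eqn:marginal}: since the two sides depend on disjoint allocation subvectors $\Pi_i$ and $\Pi_j$, the intended statement is a pointwise comparison at equal investment levels, which is precisely the form that drives the sequential water-filling characterization in the subsequent subsection.
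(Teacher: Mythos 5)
Your proposal is correct and follows essentially the same route as the paper's proof: chain rule plus the sign of $\partial p_x/\partial\pi_{xy}$ for negativity, Lemma \ref{lemma:convex} for monotonic increase, the boundedness of $\frac{\partial p_x}{\partial \pi_{xy}}/p_x$ together with the vanishing of $\gamma(-\log p)^{\gamma-1}w(p)$ for the limit, and $U_i>U_j$ with a common negative derivative (evaluated at a common argument, exactly as the paper does) for the strict inequality. Your explicit remark that \eqref{eqn:marginal} must be read as a pointwise comparison at equal investment levels is a fair clarification of an ambiguity the paper leaves implicit, and your handling of the $\gamma=1$ case in the limit is slightly more careful than the paper's.
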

\begin{proof}
First, we have, $\forall x\in\mathcal{X}$ and $\forall y\in\mathcal{Y}_x$,
\begin{equation}\label{eqn:marginal_def}
\begin{aligned}
U_{x} \frac{\partial w(p_x(\Pi_x))}{\partial \pi_{xy}} =&U_{x}\gamma(-\log(p_x(\Pi_x)))^{(\gamma-1)}\\ & \cdot \frac{\partial p_x(\Pi_x)}{\partial \pi_{xy}}\Big/p_x(\Pi_x) \cdot w(p_x(\Pi_x)).
\end{aligned}
\end{equation}
Based on Assumption \ref{assump:1}, $\frac{\partial p_x(\Pi_x)}{\partial \pi_{xy}}$ is negative. In addition, $-\log(p_x(\Pi_x))$, $p_x(\Pi_x)$ and $w(p_x(\Pi_x))$ are all positive. Thus, $U_{x} \frac{\partial w(p_x(\Pi_x))}{\partial \pi_{xy}}<0$. Lemma \ref{lemma:convex} shows that $\frac{\partial}{\partial \pi_{xy}}(\frac{\partial w(p_x(\Pi_x))}{\partial \pi_{xy}})>0$, indicating that the marginal cost is monotonically increasing. In addition,
$
    \lim_{||\Pi_x||_1 \rightarrow \infty} \left|U_x \frac{\partial w(p(\Pi_x))}{\partial \pi_{xy}} \right| 
    =\lim_{||\Pi_x||_1 \rightarrow \infty} \left|\gamma U_x (-\log(p_x(\Pi_x)))^{\gamma-1} w(p_x(\Pi_x)) \right| \left|\frac{\partial p_x(\Pi_x)}{\partial \pi_{xy}}\big/p_x(\Pi_x) \right|.
$
From Assumption \ref{assump:1}, we know that as $p_x(\Pi_x) = 0$ and thus $w(p_x(\Pi_x)) \rightarrow 0$ and $-\log(p_x(\Pi_x)) \rightarrow \infty$ as $||\Pi_x||_1 \rightarrow \infty$. Since $\frac{\partial p_x}{\partial \pi_{xy}}/p_x$ is bounded, $\lim_{||\Pi_x||_1 \rightarrow \infty} |U_x \cdot \frac{\partial w(p(\Pi_x))}{\partial \pi_{xy}}|=0$. Finally, to show the inequality between the marginals, we note that based on Assumption \ref{assump:2}, $U_{i} > U_{j}$ and thus
\begin{equation*}
\begin{split}
    &U_{i}(-\log(p(\Pi_i)))^{\gamma-1}w(p(\Pi_i))\frac{\partial p_i(\Pi_i)}{\partial \pi_{iy}}\Big/p_i(\Pi_i) \\ &<U_{j}(-\log(p(\Pi_i)))^{\gamma-1}w(p(\Pi_i))\frac{\partial p_j(\Pi_i)}{\partial \pi_{jy}}\Big/p_j(\Pi_i),
\end{split}
\end{equation*}
$\forall \Pi_x \in \mathbb{R}_+^{|\mathcal{Y}_x|}, \forall x \in \mathcal{X}$ which yields \eqref{eqn:marginal} because $\frac{\partial p_j(\Pi_i)}{\partial \pi_{jy}}$ is negative.
\end{proof}

The following proposition characterizes the total amount of security resources received by the target nodes from sources under some general assumptions.
\begin{proposition}\label{prop:1}
Under Assumption \ref{assump:2}
and a complete resource transport network, the optimal strategy $\{\Pi_x^*\}_{x\in\mathcal{X}}$ satisfies the following inequality $||\Pi_1^*||_1\geq ||\Pi_2^*||_1\geq...\geq ||\Pi_{|\mathcal{X}|}^*||_1$, and the $l_1$ norm $||\Pi_x||_1=\sum_{y\in\mathcal{Y}_x}\pi_{xy}$ is the total amount of security resources received by the target node $x\in\mathcal{X}$.
\end{proposition}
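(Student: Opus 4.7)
The plan is to argue by contradiction using a feasible exchange argument that leverages Lemmas \ref{lemma:convex} and \ref{lemma:2}. Under Assumption \ref{assump:2}, every target shares the same successful-attack-probability form $p$, and since $p_x$ depends on its arguments only through the aggregate $s_x := \|\Pi_x\|_1$, a direct computation gives $U_x \partial w(p_x(\Pi_x^*))/\partial \pi_{xy} = U_x\, w'(p(s_x))\, p'(s_x)$, which is independent of $y$. Denote this marginal by $g(U_x, s_x)$. Lemma \ref{lemma:2} shows $g(U_x, s_x) < 0$ and continuously increasing toward $0$ in $s_x$ (equivalently, $|g|$ strictly decreasing in $s_x$), while its same-point inequality gives $g(U_i, s) < g(U_j, s)$ whenever $U_i > U_j$.

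Combining these two monotonicity properties yields the key comparison. Fix $i < j$, so $U_i > U_j$. If in addition $s_i \leq s_j$, then monotonicity in $s$ gives $g(U_i, s_i) \leq g(U_i, s_j)$, and the same-point inequality from Lemma \ref{lemma:2} gives $g(U_i, s_j) < g(U_j, s_j)$; chaining, $g(U_i, s_i) < g(U_j, s_j)$. Informally, under the hypothesized orderings of the $U$'s and the $s$'s, the magnitude of the marginal benefit at target $i$ strictly exceeds that at target $j$.

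Now suppose for contradiction that $s_i < s_j$ for some $i < j$. Since $s_j > 0$, at least one source $y \in \mathcal{Y}$ has $\pi_{jy}^* > 0$, and completeness of the transport network guarantees $y$ is also connected to $i$. Consider the feasible perturbation $\pi_{jy} \leftarrow \pi_{jy}^* - \epsilon$, $\pi_{iy} \leftarrow \pi_{iy}^* + \epsilon$ for small $\epsilon > 0$, with every other allocation unchanged: the budget at $y$ is preserved and nonnegativity is maintained. The first-order change in $\tilde{L}(\Pi)$ is $\epsilon\bigl[g(U_i, s_i) - g(U_j, s_j)\bigr] < 0$ by the comparison above, strictly decreasing the objective and contradicting optimality of $\Pi^*$. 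Applying this pairwise yields $\|\Pi_1^*\|_1 \geq \|\Pi_2^*\|_1 \geq \cdots \geq \|\Pi_{|\mathcal{X}|}^*\|_1$.

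The main obstacle is the clean reduction of $U_x \partial w(p_x)/\partial \pi_{xy}$ to the scalar function $g(U_x, s_x)$, since this is what lets the comparison hinge only on the aggregates $s_i$ and $s_j$ rather than on the full matrix $\Pi^*$; this relies on $p_x$ being a function of $\sum_y \pi_{xy}$ together with the common form of $p_x$ per Assumption \ref{assump:2}. A variational alternative uses KKT instead of the exchange: stationarity at any $\pi_{jy}^* > 0$ gives $g(U_j, s_j) = -\lambda_y$, while primal feasibility $\pi_{iy}^* \geq 0$ forces $g(U_i, s_i) \geq -\lambda_y$, and the same chained inequality delivers the contradiction. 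In either form, completeness of the bipartite network is essential so that a common source $y$ is available for the swap.
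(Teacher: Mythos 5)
Your proof is correct, and it takes a genuinely different (and in one respect more careful) route than the paper's. The paper first collapses all sources into a single super-source, then invokes the KKT stationarity condition directly: for any pair of targets \emph{both receiving nonzero resources}, the marginals $U_i\,\partial w(p_i)/\partial\pi_{i1}$ and $U_j\,\partial w(p_j)/\partial\pi_{j1}$ must coincide at the optimum, and since $U_i>U_j$ forces the per-unit marginal at $i$ to be less negative only if $\pi_{i1}^*>\pi_{j1}^*$ (by the monotonicity in Lemma \ref{lemma:2}), the ordering follows. Your argument instead assumes $s_i<s_j$, chains the two monotonicity properties of $g(U,s)$ to get $g(U_i,s_i)<g(U_j,s_j)$, and exhibits an explicit feasible $\epsilon$-transfer along a common source (available by completeness) that strictly decreases $\tilde L$ to first order — which is essentially the exchange technique the paper itself uses later for Proposition \ref{prop:2}. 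What your version buys: it does not need to restrict attention to pairs with strictly positive allocations, so the boundary case $\|\Pi_i^*\|_1=0<\|\Pi_j^*\|_1$ is handled uniformly rather than being left implicit as in the paper's KKT argument; your closing remark about the variational/complementary-slackness alternative recovers the paper's route. One small caveat: you correctly read Lemma \ref{lemma:2}'s inequality \eqref{eqn:marginal} as a \emph{same-point} comparison $g(U_i,s)<g(U_j,s)$ (which is what its proof actually establishes, with both sides evaluated at $\Pi_i$), and your chaining depends on that reading — worth stating explicitly since the lemma as typeset displays different arguments $\Pi_i,\Pi_j$ on the two sides.
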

\begin{proof}
As each source can transfer resources to every target and the qualities of resources are the same supplied by all source nodes, it is equivalent to aggregate all the source nodes as a single super node that has a capacity $\sum_{y\in\mathcal{Y}}\bar{q}_y$ managed by a central planner. Thus, the transport network can be seen consisting of a single source connected to a set of targets, i.e., $\mathcal{Y}=\{1\}$, $\mathcal{Y}_x=\mathcal{Y}$, and $\Pi_x=\pi_{x1}$, $\forall x\in\mathcal{X}$. Based on the KKT condition, for each pair of target nodes $i\in\mathcal{X}$ and $j\in\mathcal{X}$ receiving nonzero security resources from the source node, we have $U_i \frac{\partial w(p_i(\Pi_i))}{\partial \pi_{i1}}|_{\pi_{i1}=\pi_{i1}^*}=U_j \frac{\partial w(p_i(\Pi_j))}{\partial \pi_{j1}}|_{\pi_{j1}=\pi_{j1}^*}$. Assumptions \ref{assump:1} and \ref{assump:2} indicate that $\gamma U_i(-\log(p_i(\pi_{i1}^*)))^{\gamma-1}w(p_i(\pi_{i1}^*))\frac{\partial p_i(\pi_{i1}^*)}{\partial \pi_{i1}}/p_i(\pi_{i1}^*) = \gamma U_j(-\log(p_j(\pi_{j1}^*)))^{\gamma-1}w(p_j(\pi_{j1}^*))\frac{\partial p_j(\pi_{j1}^*)}{\partial \pi_{j1}}/p_j(\pi_{j1}^*)$, which yields
\begin{align*}
    &(-\log(p_i(\pi_{i1}^*)))^{\gamma-1}w(p_i(\pi_{i1}^*))\frac{\partial p_i(\pi_{i1}^*)}{\partial \pi_{i1}}\frac{1}{p_i(\pi_{i1}^*)} \\
    &= \frac{U_j}{U_i}(-\log(p_j(\pi_{j1}^*)))^{\gamma-1}w(p_j(\pi_{j1}^*))\frac{\partial p_j(\pi_{j1}^*)}{\partial \pi_{j1}}\frac{1}{p_j(\pi_{j1}^*)}\\
    & > (-\log(p_j(\pi_{j1}^*)))^{\gamma-1}w(p_j(\pi_{j1}^*))\frac{\partial p_j(\pi_{j1}^*)}{\partial \pi_{j1}}\frac{1}{p_j(\pi_{j1}^*)}.
\end{align*}
The inequality in the last step above is due to $U_j<U_i$ for $j>i$ and the negative marginal cost of target node on the received security resources. Based on Lemma \ref{lemma:2}, the marginal cost is continuously increasing. Thus, we can conclude $\pi_{i1}^*> \pi_{j1}^*$, where $\pi_{i1}^*$ is the total amount of resources received by target node $i$. Equivalently speaking, target node $i$ receives more security resources than target node $j$ at the optimal solution, for $i<j\in\mathcal{X}$.
\end{proof}

\textit{\textbf{Remark:}} Proposition \ref{prop:1} indicates that, under some quite general conditions, target node $i$ (higher-valued) receives more resources than target node $j$ (lower-valued) under the optimal allocation plan, $\forall i<j\in\mathcal{X}$. This result is consistent with the objective of the system planner in minimizing the aggregated expected loss of assets.

\subsection{Sequential Water-Filling of Security Investment}
The transport network is still considered to be complete. Thus, it is equivalent to combine all source nodes and regard them as a super source node with capacity $\sum_{y\in\mathcal{Y}}\bar{q}_y$. For convenience, we denote by $\tilde{\pi}_i$ the total amount of resources that target node $i$ received from the super source node, i.e., $\tilde{\pi}_i = \sum_{y\in\mathcal{Y}_i}\pi_{iy}$ in the original framework. With a slight abuse of notation, $p_x$ can be seen as a single-variable function on $\tilde{\pi}_i$.
For all target nodes $i\in\mathcal{X}$ and $j\in\mathcal{X}$ with $i<j$, we define $\tilde{\pi}_i^{j*}$ as a quantity that satisfies
\begin{equation}\label{eqn:threshold}
U_{i} \frac{\partial w(p_i(\tilde{\pi}_i))}{\partial \tilde{\pi}_i}\Bigg\vert_{\tilde{\pi}_i=\tilde{\pi}_i^{j*}}=U_j\frac{\partial w(p_j(\tilde{\pi}_j))}{\partial \tilde{\pi}_j}\Bigg\vert_{\tilde{\pi}_j=0}.
\end{equation}

\begin{proposition}\label{prop:2}
Under Assumption \ref{assump:2} and a complete transport network, the resources received by target node $j$ from the super source node, $\tilde{\pi}_j$, will be nonzero at the optimal solution if and only if  $\sum_{y\in\mathcal{Y}}\bar{q}_y>\sum_{i=1}^{j-1} \tilde{\pi}_i^{j*}$, where $\tilde{\pi}_i^{j*}$ is defined in \eqref{eqn:threshold}.
\end{proposition}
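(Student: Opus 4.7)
The plan is to reuse the super-source reduction from the proof of Proposition~\ref{prop:1}. First I would aggregate all sources into a single super source with capacity $Q := \sum_{y\in\mathcal{Y}}\bar{q}_y$, reducing (OP-A) under a complete network to minimizing $\sum_{x\in\mathcal{X}} U_x w(p_x(\tilde{\pi}_x))$ over $\tilde{\pi}_x\geq 0$ subject to $\sum_x \tilde{\pi}_x \leq Q$. By Lemma~\ref{lemma:convex} this is a strictly convex program with a unique optimum, and the KKT conditions are both necessary and sufficient.

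Next I would introduce the multiplier $\lambda\geq 0$ for the budget constraint and read off the stationarity conditions: if $\tilde{\pi}_x^*>0$, then $U_x \partial w(p_x)/\partial \tilde{\pi}_x|_{\tilde{\pi}_x^*}=-\lambda$; if $\tilde{\pi}_x^*=0$, then $U_x \partial w(p_x)/\partial \tilde{\pi}_x|_{0}\geq -\lambda$. Denote $M_x := U_x \partial w(p_x)/\partial \tilde{\pi}_x|_{0}<0$. Lemma~\ref{lemma:2} then provides two facts I need: the marginal cost is continuous and strictly increasing in $\tilde{\pi}_x$ from $M_x$ toward $0$, so for each $x$ there is a unique positive $\tilde{\pi}_x^*$ matching any target marginal value $-\lambda\in(M_x,0)$; and inequality~\eqref{eqn:marginal} evaluated at zero yields the ordering $M_1<M_2<\cdots<M_{|\mathcal{X}|}$, i.e., $-M_1>-M_2>\cdots>-M_{|\mathcal{X}|}$.

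The ordering implies that nodes become active sequentially as $\lambda$ decreases: node $j$ is active if and only if $\lambda<-M_j$, and this automatically forces every $i<j$ to be active because $-M_i>-M_j$. I would then identify the critical budget by setting $\lambda=-M_j$ and applying stationarity to each $i<j$: the equation $U_i\, \partial w(p_i)/\partial \tilde{\pi}_i|_{\tilde{\pi}_i^*}=M_j$ is exactly~\eqref{eqn:threshold}, hence $\tilde{\pi}_i^*=\tilde{\pi}_i^{j*}$ at this critical $\lambda$, and the budget consumed equals $\sum_{i=1}^{j-1}\tilde{\pi}_i^{j*}$. Since the budget is tight for any finite $Q$ (all marginals being strictly negative forces $\lambda>0$) and $\lambda$ depends monotonically on $Q$, I would conclude $\tilde{\pi}_j^*>0 \Leftrightarrow \lambda<-M_j \Leftrightarrow Q>\sum_{i=1}^{j-1}\tilde{\pi}_i^{j*}$.

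The main obstacle I anticipate is handling the boundary case $Q=\sum_{i=1}^{j-1}\tilde{\pi}_i^{j*}$ so that the inequality in the claim is strict. This is resolved by uniqueness of the KKT solution from strict convexity: at the boundary, $\lambda=-M_j$ makes $M_j+\lambda=0$, so $\tilde{\pi}_j^*=0$ satisfies KKT, while any $\tilde{\pi}_j^*>0$ would require $U_j \partial w(p_j)/\partial \tilde{\pi}_j$ to equal $M_j$ at a strictly positive allocation, contradicting strict monotonicity of the marginal in $\tilde{\pi}_j$.
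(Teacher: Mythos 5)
Your proof is correct, but it follows a genuinely different route from the paper's. The paper proves the two directions separately by contradiction: for the forward direction it argues that if $\sum_{y}\bar{q}_y\le\sum_{i=1}^{j-1}\tilde{\pi}_i^{j*}$ then some node $m<j$ must fall short of $\tilde{\pi}_m^{j*}$, so its marginal cannot coincide with that of an active node $j$; for the reverse direction it invokes the ordering from Proposition~\ref{prop:1} to conclude all $k\ge j$ receive zero, finds a node $i$ with $\tilde{\pi}_i>\tilde{\pi}_i^{j*}$, and runs an explicit $\epsilon$-perturbation (transfer $\epsilon$ from $i$ to $j$ and show $dg/d\epsilon<0$) to contradict optimality. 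You instead give a unified dual characterization: introduce the multiplier $\lambda$ of the (necessarily tight) budget constraint, observe via Lemma~\ref{lemma:2} and inequality~\eqref{eqn:marginal} at zero that the activation thresholds $-M_x$ are strictly decreasing in $x$, identify $\sum_{i<j}\tilde{\pi}_i^{j*}$ as the budget consumed exactly when $\lambda=-M_j$ (which is precisely the defining equation~\eqref{eqn:threshold}), and conclude by strict monotonicity of $\lambda$ in $Q$. Your approach buys several things: both directions fall out of a single monotone correspondence rather than two separate contradiction arguments; the boundary case $Q=\sum_{i<j}\tilde{\pi}_i^{j*}$ is handled explicitly (the paper's reverse direction only assumes strict inequality and does not address why equality gives $\tilde{\pi}_j^*=0$, though its forward direction implicitly covers it); and you do not need Proposition~\ref{prop:1} as a prerequisite, since sequential activation is derived directly from the threshold ordering. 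The paper's perturbation argument is more elementary in that it avoids duality machinery, and it makes the ``water-filling'' interpretation vivid, but your KKT argument is tighter and self-contained. One point worth stating explicitly if you write this up: the claim that every level $-\lambda\in(M_x,0)$ is attained by the marginal at a unique positive $\tilde{\pi}_x$ uses both the continuity/strict monotonicity and the limit-to-zero property established in Lemma~\ref{lemma:2}; you cite the right lemma, so this is a presentational rather than a logical gap.
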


\begin{proof}
Suppose that  $\tilde{\pi}_{j}^*>0$ for some target node $j$. Also suppose by contradiction that $\sum_{y\in\mathcal{Y}} \bar{q}_y \leq \sum_{i=1}^{j-1}\tilde{\pi}_{i}^{j*}$. Then, $\exists m \in \{1,...,j-1\}$ such that $ \tilde{\pi}_m^* < \tilde{\pi}_{m}^{j*}$. This indicates that it is infeasible to allocate $\tilde{\pi}_{m}^{j*}$ or more resources to node $m$ without exceeding the upper bound.
By definition of $\tilde{\pi}_{m}^{j*}$,
\begin{equation*}
\begin{aligned}
    U_m \frac{\partial w(p_{m}(\tilde{\pi}_{m}))}{\partial \tilde{\pi}_{m}}\bigg\rvert_{\tilde{\pi}_{m}= \tilde{\pi}_m^{*}} < U_{j}\frac{\partial w(p_{j}(\tilde{\pi}_{j}))}{\partial \tilde{\pi}_{j}}\bigg\rvert_{\tilde{\pi}_{j}= 0} \\< U_{j}\frac{\partial w(p_{j}(\tilde{\pi}_{j}))}{\partial \tilde{\pi}_{j}}\bigg\rvert_{\tilde{\pi}_{j}= \tilde{\pi}_{j}^*},
\end{aligned}
\end{equation*}
which yields a contradiction, since the marginals must coincide at the optimal solution.
Thus, $ \tilde{\pi}_j^* > 0$ leads to $\sum_{y \in \mathcal{Y}}\bar{q}_y>\sum_{i=1}^{j-1}\tilde{\pi}_i^{j*}$ under the optimal resource allocation.

To prove the other direction, we first suppose that $\sum_{y \in \mathcal{Y}}\bar{q}_y>\sum_{i=1}^{j-1}\tilde{\pi}_{i}^*$ and suppose by contradiction $\tilde{\pi}_j = 0$. Then we have $\tilde{\pi}_k = 0, \forall k>j$, and thus $\sum_{k=1,2,...,j-1}\tilde{\pi}_k = \sum_{y \in \mathcal{Y}}\bar{q}_y$ and $\exists i \in \{1,...,j-1\}$ such that $\tilde{\pi}_i > \tilde{\pi}_i^{j*}$.
A sufficiently small amount of resource, $\epsilon \in \mathbb{R}_+$ is transferred from target $i$ to $j$ which will lead to a net cost reduction in (OP-A), and thus the resource allocation is no longer optimal.
Starting with non-zero resource allocation to the target nodes $\{1,...,j-1\}$, the total cost is
$
\sum_{x \in \mathcal{X}} U_x w(p_x(\tilde{\pi}_x)).
$
From target $i$ that has $\tilde{\pi}_i^* = ||\Pi_i^*||_1 >  \tilde{\pi}_i^{j*}$, remove a sufficiently small amount of resource $\epsilon$ and add a resource amount of $\epsilon$ to target $j$. Denote the modified resource transport plan as $\pi^{(\epsilon)}$. The total cost after perturbation becomes
\begin{align*}
    \tilde{L}(\pi^{(\epsilon)}) 
    = \sum_{z \in \mathcal{X}\setminus\{i,j\}} U_zw (p_z(\tilde{\pi}_z^*)) + U_i w(p_i(\tilde{\pi}_i - \epsilon)) \\+ U_j w(p_j(\epsilon)).
\end{align*}
We next define $g(\epsilon) = U_i w(p_i(\tilde{\pi}_i - \epsilon)) + U_j w(p_j(\epsilon))$. Then,
$
\tilde{L}(\pi^{*}) = \sum_{z \in \mathcal{X}\setminus\{i,j\}} U_z w(p_z(\tilde{\pi}_z)) + g(0),\ 
\tilde{L}(\pi^{(\epsilon)}) = \sum_{z \in \mathcal{X}\setminus\{i,j\}} U_z w(p_z(\tilde{\pi}_z)) + g(\epsilon).
$
If $g(\epsilon) < g(0)$, then $\tilde{L}(\pi^{(\epsilon)})<\tilde{L}(\tilde{\pi}^*)$, which yields a positive net cost reduction, meaning that the resource allocation strategy after perturbation is worse off. It is clear that
$$
    \frac{dg}{d\epsilon} 
    = -U_i \frac{\partial w(p_i(\pi_i))}{\partial \tilde{\pi}_{i}}\bigg\rvert_{\pi_i = \tilde{\pi}_{i} - \epsilon} + U_j \frac{\partial w(p_j(\pi_j))}{\partial \tilde{\pi}_{j}}\bigg\rvert_{\pi_j= \epsilon}.
$$
Based on $\tilde{\pi}_i^* >\tilde{\pi}_i^{j*}$ and Lemma \ref{lemma:2},
$$
U_i \frac{\partial w(p_i(\tilde{\pi}_i))}{\partial \tilde{\pi}_{i}}\bigg\rvert_{\tilde{\pi}_i = \tilde{\pi}_i^* } > U_j \frac{\partial w(p_j(\tilde{\pi}_j))}{\partial \tilde{\pi}_{j}}\bigg\rvert_{\tilde{\pi}_j= 0}.
$$
Thus, $\lim_{\epsilon \rightarrow 0} \frac{dg}{d\epsilon}$ is negative, indicating that  $g(\epsilon)$ is decreasing for a sufficiently small $\epsilon$. Therefore, we obtain $\tilde{L}(\pi^{(\epsilon)}) < \tilde{L}(\tilde{\pi}^*)$ which is a contradiction.
\end{proof}

\textit{\textbf{Remark:}} Proposition \ref{prop:2} implies that the super source node first allocates $\tilde{\pi}_1^{2*}$ security resources to target node 1, and then starts to transfer resources to both target nodes 1 and 2 while maintaining a same marginal cost until reaching $\tilde{\pi}_1^{3*}$ and $\tilde{\pi}_2^{3*}$, respectively. Afterward, in addition to target nodes 1 and 2, target node 3 starts to receive resources, and the marginal costs at all nodes are kept the same during security resource investment. The resource allocation scheme will follow this fashion until all resources are transferred. The above discussion leads to \textit{sequential water-filling} of security resource transport over networks.

As the original transport network includes multiple source nodes, we need to determine the strategy for each of them. The above discussion indicates that the optimal resource allocation plan can be obtained sequentially, i.e., each source node completes allocating its security resources to the targets in sequential order. Specifically, source node 1 will first transfer its resource to target node 1. If $\bar{q}_1<\tilde{\pi}_1^{2*}$, then the next source nodes (node 2, 3, etc) will continue allocate resource to target node 1 until it receives $\tilde{\pi}_1^{2*}$ amount of resources. If $\bar{q}_1>\tilde{\pi}_1^{2*}$, then source node 1 first allocates $\tilde{\pi}_1^{2*}$ amount of resources to the target node 1, and then start transferring the remaining resources to both targets 1 and 2 while maintaining a same marginal cost at both nodes. After source node 1 completes its resource transport, source node 2 starts to transfer its resources to the appropriate targets in a similar manner. This process terminates until all source nodes finish their resource allocation to the targets. 

\subsection{Behavioral Impacts on Security Resource Allocation}
The impact of incorporating the behavioral element to probability perception is captured by the parameter $\gamma$. Clearly, there is no behavioral consideration when $\gamma = 1$, and the probabilities are perceived as their actual values. When $\gamma\in(0,1)$, we have the following result on the behavioral impacts on the resource allocation plan.

\begin{proposition}
Under Assumptions \ref{assump:1} and \ref{assump:2}, $p_x(0)<\frac{1}{e}$, $\forall x\in\mathcal{X}$, and a complete transport network, ${d \tilde{\pi}_i^{j*}}/{d \gamma}<0$ for $i<j$, $\forall i,j\in\mathcal{X}$, with $\tilde{\pi}_i^{j*}$ defined in \eqref{eqn:threshold}.
\end{proposition}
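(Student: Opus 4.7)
My plan is to treat the defining equation \eqref{eqn:threshold} as implicitly defining $\tilde{\pi}_i^{j*}$ as a function of $\gamma$ and apply the implicit function theorem. By Assumption \ref{assump:2} both targets share a common attack probability function $p(\cdot)$, and by Lemma \ref{lemma:2} both sides of \eqref{eqn:threshold} are negative, so I would take absolute values and then logarithms of both sides. Introducing $h(\tilde{\pi}):=-\log p(\tilde{\pi})$ and substituting \eqref{eqn:weighting_func} recasts \eqref{eqn:threshold} as $\mathcal{F}(\tilde{\pi}_i^{j*},\gamma)=0$, where
\begin{equation*}
\mathcal{F}(x,\gamma):=\log\frac{U_i}{U_j}+(\gamma-1)\log\frac{h(x)}{h(0)}-h(x)^{\gamma}+h(0)^{\gamma}+\log\frac{|p'(x)|/p(x)}{|p'(0)|/p(0)}.
\end{equation*}
By the implicit function theorem $d\tilde{\pi}_i^{j*}/d\gamma=-(\partial\mathcal{F}/\partial\gamma)/(\partial\mathcal{F}/\partial\tilde{\pi}_i^{j*})$, so it suffices to determine the signs of the two partials.

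Note that $\mathcal{F}(x,\gamma)$ is exactly the difference of $\log|U_i\,\partial w(p(x))/\partial x|$ and $\log|U_j\,\partial w(p(x))/\partial x|$ evaluated at $x=0$. By Lemma \ref{lemma:2} the signed marginal is negative and continuously increasing to $0$ in $x$, so its absolute value is strictly decreasing in $x$; hence $\partial\mathcal{F}/\partial\tilde{\pi}_i^{j*}<0$. A direct computation also gives
\begin{equation*}
\partial\mathcal{F}/\partial\gamma=\psi(h(\tilde{\pi}_i^{j*}))-\psi(h(0)),\qquad \psi(h):=(1-h^{\gamma})\log h.
\end{equation*}
For $i<j$, Lemma \ref{lemma:2} forces $\tilde{\pi}_i^{j*}>0$ (the LHS of \eqref{eqn:threshold} at $\tilde{\pi}_i=0$ is strictly more negative than the RHS), so $h(\tilde{\pi}_i^{j*})>h(0)$, and the hypothesis $p_x(0)<1/e$ yields $h(0)>1$.

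The main obstacle is showing that $\psi$ is strictly decreasing on $(1,\infty)$ for $\gamma\in(0,1)$. I would reduce this to the elementary inequality $h^{\gamma}(1+\gamma\log h)>1$ for $h>1$, since $\psi'(h)=1/h-h^{\gamma-1}(1+\gamma\log h)<0$ is equivalent to it after multiplying through by $h$. This inequality holds as equality at $h=1$, and differentiating its left-hand side yields $\gamma h^{\gamma-1}(2+\gamma\log h)>0$ on $(1,\infty)$, so the inequality is strict throughout $(1,\infty)$. Consequently $\psi(h(\tilde{\pi}_i^{j*}))<\psi(h(0))$, i.e., $\partial\mathcal{F}/\partial\gamma<0$, and combining two negative signs through the implicit function theorem produces $d\tilde{\pi}_i^{j*}/d\gamma<0$. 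The hypothesis $p_x(0)<1/e$ is used precisely to place $h(0)$ (and hence $h(\tilde{\pi}_i^{j*})$) in the region where $\psi$ is monotone; without it the sign of $\partial\mathcal{F}/\partial\gamma$ cannot be resolved.
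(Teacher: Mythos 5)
Your proof is correct and follows essentially the same route as the paper's: both arguments implicitly differentiate the logarithm of the threshold identity \eqref{eqn:threshold} with respect to $\gamma$ and conclude by showing the $\gamma$-partial and the $\tilde{\pi}$-partial have the appropriate (here, both negative) signs. The only local differences are that the paper gets the sign of the $\gamma$-partial by directly comparing the two products $(h^{\gamma}-1)\log h$ at $h(\tilde{\pi}_i^{j*})>h(0)>1$ (both factors positive and larger at the first point), avoiding your monotonicity analysis of $\psi$, while for the $\tilde{\pi}$-partial it performs an explicit term-by-term sign check of the expanded denominator using log-convexity of $p$, where your identification of that partial with the log-derivative of the marginal's magnitude via Lemma \ref{lemma:2} is a cleaner shortcut.
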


\begin{proof}
Based on \eqref{eqn:weighting_func} and \eqref{eqn:threshold}, we have
\begin{equation}\label{eqn:threshold2}
\begin{aligned}
    U_i\gamma(-\log(p_i(\tilde{\pi}_i^{j*})))^{(\gamma-1)} w(p_i(\tilde{\pi}_i^{j*})) \frac{\partial p_i(\tilde{\pi}_i)}{\partial \tilde{\pi}_i}\bigg\rvert_{\tilde{\pi}_i=\tilde{\pi}_i^{j*}} \frac{1}{p_i(\tilde{\pi}_i^{j*})} \\
   = U_j\gamma(-\log(p_j(0)))^{(\gamma-1)} w(p_j(0)) \frac{\partial p_j(\tilde{\pi}_j)}{\partial \tilde{\pi}_j}\bigg\rvert_{\tilde{\pi}_j=0} \frac{1}{p_j(0)}.
\end{aligned}
\end{equation}
Based on \eqref{eqn:threshold2}, we can characterize the sensitivity of the amount of security resources transported to each target over the behavioral parameter $\gamma$. Taking log of each side of \eqref{eqn:threshold2} and differentiating with respect to $\gamma$ yield
\begin{equation}\label{derivative_gamma}
\begin{aligned}
        \frac{d \tilde{\pi}_i^{j*}}{d \gamma} = \frac{((-\log(p_i(\tilde{\pi}_i^{j*})))^{\gamma}-1) \log(-\log(p_i(\tilde{\pi}_i^{j*})))}{\Lambda_{i}^{j}} \\
        -\frac{((-\log(p_j(0)))^{\gamma}-1)  \log(-\log(p_j(0)))}{\Lambda_{i}^{j}},
\end{aligned}
\end{equation}
where
$
        \Lambda_{i}^{j} = (\gamma - 1 - \gamma(-\log(p_i(\tilde{\pi}_i^{j*})))^{\gamma}) \frac{\partial p_i(\tilde{\pi}_i)}{\partial \tilde{\pi}_i}\bigg\rvert_{\tilde{\pi}_i=\tilde{\pi}_i^{j*}} \cdot\frac{1}{p_i(\tilde{\pi}_i^{j*})} \\ \log(p_i(\tilde{\pi}_i^{j*}))+
        \frac{p_i(\tilde{\pi}_i^{j*})\frac{\partial^2 p_i(\tilde{\pi}_i)}{\partial \tilde{\pi}_i^2}\big\rvert_{\tilde{\pi}_i=\tilde{\pi}_i^{j*}}-\big(\frac{\partial p_i(\tilde{\pi}_i)}{\partial \tilde{\pi}_i}\big\rvert_{\tilde{\pi}_i=\tilde{\pi}_i^{j*}}\big)^2}{\frac{\partial p_i(\tilde{\pi}_i)}{\partial \tilde{\pi}_i}\big\rvert_{\tilde{\pi}_i=\tilde{\pi}_i^{j*}}\cdot p_i(\tilde{\pi}_i^{j*})}.$
Under the assumption that $p_j(0)<\frac{1}{e}$ and $p_i(\tilde{\pi}_i^{j*}) < p_j(0)$ for $\tilde{\pi}_i^{j*} > 0$, we have $-\log(p_i(\tilde{\pi}_i^{j*})) > -\log(p_j(0))>1$ and thus $\log(-\log(p_i(\tilde{\pi}_i^{j*}))) > \log(-\log(p_j(0))) >0$ and  $(-\log(p_i(\tilde{\pi}_i^{j*})))^{\gamma} - 1 >(-\log(p_j(0)))^{\gamma} - 1$. Hence, the numerator of \eqref{derivative_gamma} is positive. From Assumption \ref{assump:1}, we have $\frac{\partial p_i(\tilde{\pi}_i)}{\partial \tilde{\pi}_i}\bigg\rvert_{\tilde{\pi}_i=\tilde{\pi}_i^{j*}} < 0$ and because $p_i(\tilde{\pi}_i)$ is log-convex, $p_i(\tilde{\pi}_i^{j*})\frac{\partial^2 p_i(\tilde{\pi}_i)}{\partial \tilde{\pi}_i^2}\big\rvert_{\tilde{\pi}_i=\tilde{\pi}_i^{j*}}\geq \big(\frac{\partial p_i(\tilde{\pi}_i)}{\partial \tilde{\pi}_i}\big\rvert_{\tilde{\pi}_i=\tilde{\pi}_i^{j*}}\big)^2$. Thus, the denominator of \eqref{derivative_gamma} negative, which yields ${d \tilde{\pi}_i^{j*}}/{d \gamma}<0$. 
\end{proof}

\textit{\textbf{Remark:}} The above analysis, together with Proposition \ref{prop:1} indicate that when the behavioral misperception on the attack success probability is considered, the sources will supply security resources to fewer target nodes than the optimal strategy obtained under the non-behavioral counterpart. In other words, the behavioral security resource owners prefer to secure higher-valued assets while paying less attention to those relatively lower-valued targets, and it leads to a \textit{discriminative resource allocation scheme} comparing with the one developed under fully rational scenario.

\section{Distributed Algorithm for Bounded Rational Security Investment}
This section aims to develop a distribution computational scheme to obtain the behavioral security investment scheme.

In the established framework, the objective function can also incorporate the preferences of the source nodes in the security resource transport design in addition to the cost of the target nodes. The utility function of source node $y$ on transferring $\pi_{xy}$ security resources to target node $x$ is denoted by $s_{xy}:\mathbb{R}_+\rightarrow\mathbb{R}$.  In addition, to balance the security resource allocation, the planner considers an upper bound of each target node $x\in\mathcal{X}$ on the received security resources from connected sources, captured by $\bar{p}_x\in\mathbb{R}_+$, i.e., $\sum_{y\in\mathcal{Y}_x} \pi_{xy}\leq \bar{p}_{x}$. To this end,
the system planner aims to address:
\begin{equation*}
\begin{aligned}
\mathrm{(OP-B):}\quad \min_{\Pi}\ &\sum_{x\in\mathcal{X}} U_x w(p_x(\Pi_x))-\sum_{y\in\mathcal{Y}} \sum_{x\in\mathcal{X}_y} \tau_y s_{xy}(\pi_{xy})\\
    \mathrm{s.t.}\ 
    &0\leq \sum_{y\in\mathcal{Y}_x} \pi_{xy}\leq \bar{p}_{x},\ \forall x\in\mathcal{X},\\
    &0\leq \sum_{x\in\mathcal{X}_y} \pi_{xy}\leq \bar{q}_{y},\ \forall y\in\mathcal{Y},\\
    &\pi_{xy}\geq 0,\ \forall \{x,y\} \in\mathcal{E},
\end{aligned}
\end{equation*}
where $\tau_y \in \mathbb{R}_+$ is a positive weighting factor balancing the loss of the targets and the utility of the sources under a given security allocation strategy. It is straightforward to observe that as $\tau_y\rightarrow 0,\ \forall y$, the solution to (OP-B) will converge to the one of (OP-A), given that the targets have no constraint on the maximum received security resources.

As the resource transport network becomes complex with a large number of participating nodes, a centralized scheme to compute the optimal solution to (OP-B) can be computationally expensive. In addition, the centralized optimization paradigm requires the planner to collect heterogeneous information from all source and target nodes, including their utility parameters, supply and demand upper bounds, degree of misperception on attack success, and value of targets, which does not preserve a desirable level of privacy. Due to the above two concerns, it is necessary to devise a \textit{distributed} and \textit{privacy-preserving} scheme to obtain the behavioral resource allocation strategy over a large-scale network. 

To facilitate the development of such an algorithm, we first introduce two ancillary variables $\pi_{xy}^t$ and $\pi_{xy}^s$. The superscripts $t$ and $s$ indicate that the corresponding parameter belongs to a target and source node, respectively. We then set $\pi_{xy} = \pi_{xy}^t$ and $\pi_{xy} = \pi_{xy}^s$, indicating that the solutions proposed by the targets and sources are consistent. This reformulation facilitates the design of a distributed algorithm which allows us to iterate to obtain the optimal strategy. To this end, the reformulated problem is presented as follows:
\begin{equation}\label{eqn:ancillary_intro}
\begin{aligned}
\min_{\Pi^t \in \mathcal{F}^t, \Pi^s \in \mathcal{F}^s,\Pi} & \sum_{x\in\mathcal{X}}  U_x w(p_x(\Pi_x^t)) - \sum_{y\in\mathcal{Y}} \sum_{x\in\mathcal{X}_y} \tau_y s_{xy}(\pi_{xy}^s) \\
\mathrm{s.t.}\quad & \pi_{xy}^t = \pi_{xy},\ \forall \{x,y\}\in\mathcal{E},\\
& \pi_{xy}^s = \pi_{xy},\ \forall \{x,y\}\in\mathcal{E},
\end{aligned}
\end{equation}
where $\Pi^t:=\{\pi_{xy}^t\}_{x\in\mathcal{X}_y,y\in\mathcal{Y}}$, $\Pi^s:=\{\pi_{xy}^s\}_{x\in\mathcal{X},y\in\mathcal{Y}_x}$, $\mathcal{F}^t := \{ \Pi^t | \pi_{xy}^t \geq 0, \underline{p}_x \leq \sum_{y \in \mathcal{Y}_x} \pi_{xy}^t \leq \bar{p}_x,\ \{x,y\} \in \mathcal{E}\}$, and $\mathcal{F}^s := \{ \Pi^s | \pi_{xy}^s \geq 0, \underline{q}_y \leq \sum_{x \in \mathcal{X}_y} \pi_{xy}^s \leq \bar{q}_y,\ \{x,y\} \in \mathcal{E} \}$.

We resort to alternating direction method of multipliers (ADMM) \cite{boyd2011distributed} to develop a distributed computational algorithm. First, let $\alpha_{xy}^s$ and $\alpha_{xy}^t$ be the Lagrangian multipliers associated with the constraint $\pi_{xy}^s = \pi_{xy}$ and $\pi_{xy}^t = \pi_{xy}$, respectively. The Lagrangian function associated with the optimization problem \eqref{eqn:ancillary_intro} can then be written as follows:
\begin{equation}\label{eqn:lagrangian}
\begin{aligned}
\mathcal{L}&(\Pi^t,\Pi^s,\Pi,\alpha_{xy}^t,\alpha_{xy}^s) \\
&= \sum_{x\in\mathcal{X}} U_x w(p_x(\Pi_x^t)) - \sum_{y\in\mathcal{Y}} \sum_{x\in\mathcal{X}_y} \tau_y s_{xy}(\pi_{xy}^s) \\
&+ \sum_{x\in\mathcal{X}} \sum_{y\in\mathcal{Y}_x} \alpha_{xy}^t \left( \pi_{xy}^t - \pi_{xy} \right)  
+ \sum_{y\in\mathcal{Y}} \sum_{x\in\mathcal{X}_y} \alpha_{xy}^s \left( \pi_{xy} - \pi_{xy}^s \right) \\ &+\frac{\eta}{2} \sum_{x\in\mathcal{X}} \sum_{y\in\mathcal{Y}_x} \left( \pi_{xy}^t - \pi_{xy} \right)^2 
+\frac{\eta}{2} \sum_{x\in\mathcal{X}} \sum_{y\in\mathcal{Y}_x} \left( \pi_{xy} - \pi_{xy}^s \right)^2,
\end{aligned}
\end{equation}
where $\eta$ is a positive constant controlling the convergence. We have the following result on the distributed algorithm.
\begin{proposition} \label{prop:iterations}
 The iterative steps of ADMM to solve (OP-B) are summarized as follows:
\begin{equation}\label{ADMM1_eqn1}
\begin{split}
    \Pi_{x}^t(k+1) &\in \arg \min_{\Pi_x^t\in\mathcal{F}_{x}^t}  U_x w(p_x(\Pi_x^t))\\ 
    &+\sum_{y\in\mathcal{Y}_x} \alpha_{xy}^t(k) \pi_{xy}^t + \frac{\eta}{2} \sum_{y\in\mathcal{Y}_x} (\pi_{xy}^t - \pi_{xy}(k))^2,
\end{split}
\end{equation}
\begin{equation}\label{ADMM1_eqn2}
\begin{aligned}
        \Pi_{y}^s(k+1) &\in \arg \min_{\Pi_{y}^s\in\mathcal{F}_{y}^s} - \sum_{x\in\mathcal{X}_y} \tau_y s_{xy}(\pi_{xy}^s) \\ &-\sum_{x\in\mathcal{X}_y} \alpha_{xy}^s(k)\pi_{xy}^s + \frac{\eta}{2} \sum_{x\in\mathcal{X}_y} (\pi_{xy}(k) - \pi_{xy}^s)^2,
\end{aligned}
\end{equation}
\begin{equation}\label{ADMM1_eqn3}
\begin{split}
    \pi_{xy}(&k+1)= \arg \min_{\pi_{xy}} - \alpha_{xy}^t(k)\pi_{xy} + \alpha_{xy}^s(k)\pi_{xy} \\
    &+\frac{\eta}{2}(\pi_{xy}^t(k+1) - \pi_{xy})^2 + \frac{\eta}{2}(\pi_{xy} - \pi_{xy}^s(k+1))^2,
\end{split}
\end{equation}
\begin{equation}\label{ADMM1_eqn4}
\begin{split}
    \alpha_{xy}^t(k+1) = \alpha_{xy}^t(k) + \eta(\pi_{xy}^t(k+1)-\pi_{xy}(k+1))^2,
\end{split}
\end{equation}
\begin{equation}\label{ADMM1_eqn5}
\begin{split}
    \alpha_{xy}^s(k+1) = \alpha_{xy}^s(k) + \eta(\pi_{xy}(k+1)-\pi_{xy}^s(k+1))^2,
\end{split}
\end{equation}
where $\Pi_{\tilde{x}}^t:=\{\pi_{xy}^t\}_{y\in\mathcal{Y}_x,x=\tilde{x}}$ represents the solution at target node $\tilde{x}\in\mathcal{X}$, and $\Pi_{\tilde{y}}^s:=\{\pi_{xy}^s\}_{x\in\mathcal{X}_y,y=\tilde{y}}$ represents the proposed solution at source node $\tilde{y}\in\mathcal{Y}$. In addition, $\mathcal{F}_{x}^t := \{ \Pi_{x}^t | \pi_{xy}^t \geq 0, y\in\mathcal{Y}_x, \underline{p}_x \leq \sum_{y \in \mathcal{Y}_x} \pi_{xy}^t \leq \bar{p}_x\}$, and $\mathcal{F}_{y}^s := \{ \Pi_{y}^s | \pi_{xy}^s \geq 0, x\in\mathcal{X}_y, \underline{q}_y \leq \sum_{x \in \mathcal{X}_y} \pi_{xy}^s \leq \bar{q}_y\}$.
\end{proposition}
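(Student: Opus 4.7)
The plan is to derive \eqref{ADMM1_eqn1}--\eqref{ADMM1_eqn5} by applying the standard three-block ADMM template directly to the augmented Lagrangian $\mathcal{L}$ in \eqref{eqn:lagrangian}. The reformulation in \eqref{eqn:ancillary_intro} already puts the problem in consensus form with primal blocks $\Pi^t$, $\Pi^s$, $\Pi$ and dual blocks $\{\alpha_{xy}^t\}$, $\{\alpha_{xy}^s\}$, and the equality constraints $\pi_{xy}^t=\pi_{xy}$ and $\pi_{xy}=\pi_{xy}^s$ are separated edge by edge. At iteration $k+1$, ADMM prescribes cycling through $\Pi^t(k+1)\in\arg\min_{\Pi^t\in\mathcal{F}^t}\mathcal{L}(\Pi^t,\Pi^s(k),\Pi(k),\alpha^t(k),\alpha^s(k))$, the analogous update for $\Pi^s(k+1)$, the unconstrained update for $\Pi(k+1)$, and two dual-ascent steps of size $\eta$ on the primal residuals that produce \eqref{ADMM1_eqn4}--\eqref{ADMM1_eqn5} directly from the method-of-multipliers rule.

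The substance of the proof is to verify that each primal minimization decomposes across the network. For the $\Pi^t$ step I would drop every summand of $\mathcal{L}$ that is constant in $\Pi^t$ (those containing only $\pi_{xy}^s$, only $\pi_{xy}(k)$, or the cross-term $\alpha_{xy}^s(\pi_{xy}-\pi_{xy}^s)$) and regroup the remaining terms by target index $x$. Because $U_x w(p_x(\Pi_x^t))$ couples only variables sharing the same $x$, and because the feasible set factors as $\mathcal{F}^t=\prod_{x\in\mathcal{X}}\mathcal{F}_x^t$ (the nonnegativity and budget constraints in $\mathcal{F}^t$ never mix different $x$'s), the joint minimization splits into $|\mathcal{X}|$ independent subproblems, yielding \eqref{ADMM1_eqn1}. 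A symmetric argument on the $\Pi^s$ block, exploiting $\mathcal{F}^s=\prod_{y\in\mathcal{Y}}\mathcal{F}_y^s$ and the source-indexed utilities $\tau_y s_{xy}(\pi_{xy}^s)$, produces \eqref{ADMM1_eqn2}.

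For the unconstrained $\Pi$ step, every term involving a particular $\pi_{xy}$ collects into the strictly convex univariate quadratic $-\alpha_{xy}^t(k)\pi_{xy}+\alpha_{xy}^s(k)\pi_{xy}+\frac{\eta}{2}(\pi_{xy}^t(k+1)-\pi_{xy})^2+\frac{\eta}{2}(\pi_{xy}-\pi_{xy}^s(k+1))^2$, so the minimization decomposes edge by edge and its argmin is precisely \eqref{ADMM1_eqn3}. The dual updates \eqref{ADMM1_eqn4}--\eqref{ADMM1_eqn5} then follow from the standard rule of incrementing each multiplier by $\eta$ times the corresponding equality-constraint residual evaluated at the newly computed primal iterates.

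The main technical point is the separability bookkeeping in the second paragraph: one must verify carefully that no cross-term in $\mathcal{L}$ couples variables from distinct targets in the $\Pi^t$ step, from distinct sources in the $\Pi^s$ step, or from distinct edges in the $\Pi$ step, and that the constraint sets factor in the same way. Once this is confirmed, well-posedness of the argmin operators follows from Lemma \ref{lemma:convex} (convexity of $U_x w(p_x(\cdot))$ in $\Pi_x^t$) together with the strictly convex quadratic regularizers contributed by the augmented Lagrangian, and Proposition \ref{prop:iterations} reduces to the textbook ADMM derivation applied to \eqref{eqn:ancillary_intro}.
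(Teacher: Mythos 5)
Your derivation is correct and is exactly the standard consensus-ADMM argument; the paper itself does not spell this out but instead defers entirely to the proof of Proposition 1 in \cite{jhughes2021fair}, so your write-up is strictly more self-contained than what appears in the text. The separability bookkeeping you emphasize is the right thing to check and it does go through: $\mathcal{F}^t$ and $\mathcal{F}^s$ factor over targets and sources respectively, the coupling terms $U_x w(p_x(\Pi_x^t))$ and $\tau_y s_{xy}(\pi_{xy}^s)$ respect that factorization, and the $\Pi$-step is an unconstrained, edge-separable strictly convex quadratic. One point worth flagging: the standard dual-ascent rule you invoke increments each multiplier by $\eta$ times the \emph{linear} residual, i.e.\ $\alpha_{xy}^t(k+1)=\alpha_{xy}^t(k)+\eta\left(\pi_{xy}^t(k+1)-\pi_{xy}(k+1)\right)$, whereas the proposition as printed squares the residual in \eqref{ADMM1_eqn4}--\eqref{ADMM1_eqn5}. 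Your derivation yields the linear version, which is also what the simplified update \eqref{ADMM2_eqn4} in Proposition \ref{prop:simplified_iterations} presupposes (the two dual sequences can only collapse to a single $\alpha_{xy}$ via the relation $\alpha_{xy}^t=-\alpha_{xy}^s$ that the linear updates preserve); the square in the statement is almost certainly a typo, and you should say explicitly that your argument establishes the corrected form rather than the printed one. Finally, since this is a three-block ADMM, it is worth a sentence noting that the proposition claims only the form of the iterations, not convergence, so no additional convergence hypothesis is needed for your argument to be complete.
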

\begin{proof}
The proof follows similarly to the proof of Proposition 1 in \cite{jhughes2021fair}.
\end{proof}

The iterations in \eqref{ADMM1_eqn1}-\eqref{ADMM1_eqn5} can be further simplified to four iterations.
\begin{proposition}\label{prop:simplified_iterations}
The iterations \eqref{ADMM1_eqn1}-\eqref{ADMM1_eqn5} can be simplified as follows:
\begin{equation}\label{ADMM2_eqn1}
\begin{split}
    \Pi_{x}^t(k+1) &\in \arg \min_{\Pi_x^t\in\mathcal{F}_{x}^t}  U_x w(p_x(\Pi_x^t))\\ 
    &+\sum_{y\in\mathcal{Y}_x} \alpha_{xy}(k) \pi_{xy}^t + \frac{\eta}{2} \sum_{y\in\mathcal{Y}_x} (\pi_{xy}^t - \pi_{xy}(k))^2,
\end{split}
\end{equation}
\begin{equation}\label{ADMM2_eqn2}
\begin{aligned}
        \Pi_{y}^s(k+1) &\in \arg \min_{\Pi_{y}^s\in\mathcal{F}_{y}^s} - \sum_{x\in\mathcal{X}_y} \tau_y s_{xy}(\pi_{xy}^s) \\ &-\sum_{x\in\mathcal{X}_y} \alpha_{xy}(k)\pi_{xy}^s + \frac{\eta}{2} \sum_{x\in\mathcal{X}_y} (\pi_{xy}(k) - \pi_{xy}^s)^2,
\end{aligned}
\end{equation}
\begin{equation}\label{ADMM2_eqn3}
\begin{split}
    \pi_{xy}(k+1) = \frac{1}{2} \left(\pi_{xy}^{t}(k+1) + \pi_{xy}^{s}(k+1)\right),
\end{split}
\end{equation}
\begin{equation}\label{ADMM2_eqn4}
\begin{split}
    \alpha_{xy}(k+1) = \alpha_{xy}(k) + \frac{\eta}{2}\left(\pi_{xy}^{t}(k+1) - \pi_{xy}^{s}(k+1)\right).
\end{split}
\end{equation}
\end{proposition}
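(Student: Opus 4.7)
The plan is to establish an invariant: the two dual multipliers $\alpha_{xy}^t$ and $\alpha_{xy}^s$ associated with the two consensus constraints in \eqref{eqn:ancillary_intro} remain equal along the entire ADMM trajectory, provided they are initialized to a common value. Once this invariant is in hand, the auxiliary variable $\pi_{xy}$ update collapses to a simple arithmetic mean and the two dual recursions collapse to one, which together yield \eqref{ADMM2_eqn1}-\eqref{ADMM2_eqn4}.

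First I would solve the $\pi_{xy}$-subproblem \eqref{ADMM1_eqn3} in closed form. Its objective is unconstrained and strictly convex quadratic in $\pi_{xy}$; setting the derivative to zero gives
\begin{equation*}
\pi_{xy}(k+1) \;=\; \tfrac{1}{2}\bigl(\pi_{xy}^{t}(k+1)+\pi_{xy}^{s}(k+1)\bigr) + \frac{\alpha_{xy}^{t}(k)-\alpha_{xy}^{s}(k)}{2\eta}.
\end{equation*}
The correction term $(\alpha_{xy}^{t}(k)-\alpha_{xy}^{s}(k))/(2\eta)$ vanishes precisely when the two multipliers coincide, reducing the update to the average \eqref{ADMM2_eqn3}.

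Next I would prove $\alpha_{xy}^{t}(k)=\alpha_{xy}^{s}(k)$ for every $k\ge 0$ by induction. The base case holds under the standard ADMM initialization $\alpha_{xy}^{t}(0)=\alpha_{xy}^{s}(0)=0$. For the inductive step, assume equality at step $k$; then the closed form above gives $\pi_{xy}(k+1)=\tfrac{1}{2}(\pi_{xy}^{t}(k+1)+\pi_{xy}^{s}(k+1))$, so the residuals $\pi_{xy}^{t}(k+1)-\pi_{xy}(k+1)$ and $\pi_{xy}(k+1)-\pi_{xy}^{s}(k+1)$ are both equal to $\tfrac{1}{2}(\pi_{xy}^{t}(k+1)-\pi_{xy}^{s}(k+1))$. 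Reading \eqref{ADMM1_eqn4}-\eqref{ADMM1_eqn5} as the standard linear ADMM dual ascent (the squared exponents appear to be typographical slips, since the merged update \eqref{ADMM2_eqn4} is linear in the residual), the two multipliers receive the identical increment $\tfrac{\eta}{2}(\pi_{xy}^{t}(k+1)-\pi_{xy}^{s}(k+1))$, so equality propagates to step $k+1$.

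Finally I would merge the multipliers via $\alpha_{xy}(k):=\alpha_{xy}^{t}(k)=\alpha_{xy}^{s}(k)$ and substitute back into \eqref{ADMM1_eqn1}-\eqref{ADMM1_eqn5}: the target subproblem becomes \eqref{ADMM2_eqn1}, the source subproblem becomes \eqref{ADMM2_eqn2}, the auxiliary $\pi_{xy}$ step collapses to \eqref{ADMM2_eqn3}, and the two dual recursions collapse to \eqref{ADMM2_eqn4}. The main obstacle is spotting the invariant; once identified, the rest is routine bookkeeping with the unconstrained quadratic minimum and careful sign tracking in the Lagrangian \eqref{eqn:lagrangian}.
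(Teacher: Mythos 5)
Your proposal is correct, and it supplies exactly the argument the paper omits: the printed proof is only a pointer to Proposition 2 of the cited reference, so there is nothing in the paper itself to compare against beyond that standard consensus-ADMM simplification, which is what you reconstruct. Your three ingredients are all sound: (i) the closed form $\pi_{xy}(k+1)=\tfrac{1}{2}\bigl(\pi_{xy}^{t}(k+1)+\pi_{xy}^{s}(k+1)\bigr)+\bigl(\alpha_{xy}^{t}(k)-\alpha_{xy}^{s}(k)\bigr)/(2\eta)$ follows from the first-order condition of the unconstrained strictly convex quadratic in \eqref{ADMM1_eqn3}; (ii) the induction showing $\alpha_{xy}^{t}(k)=\alpha_{xy}^{s}(k)$ for all $k$ is the essential invariant, and you correctly flag that it needs the initialization $\alpha_{xy}^{t}(0)=\alpha_{xy}^{s}(0)$, which the paper never states but is the standard choice; (iii) both primal residuals equal $\tfrac{1}{2}\bigl(\pi_{xy}^{t}(k+1)-\pi_{xy}^{s}(k+1)\bigr)$ once the averaging holds, so the two dual recursions receive identical increments and merge into \eqref{ADMM2_eqn4}. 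Your reading of the exponents in \eqref{ADMM1_eqn4}--\eqref{ADMM1_eqn5} as typographical is also necessary, not merely cosmetic: if the squared residuals were taken literally, the multiplier equality would still propagate (both increments would be $\tfrac{\eta}{4}\bigl(\pi_{xy}^{t}-\pi_{xy}^{s}\bigr)^{2}$), but the merged update could not reduce to the linear form \eqref{ADMM2_eqn4}, so the proposition as stated forces the linear dual ascent. The only thing I would add is an explicit sentence that substituting the merged multiplier into \eqref{ADMM1_eqn1} and \eqref{ADMM1_eqn2} changes nothing but notation, since those subproblems depend on the duals only through $\alpha_{xy}^{t}(k)$ and $\alpha_{xy}^{s}(k)$ respectively; you state this, and it is indeed routine.
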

\begin{proof}
The proof follows similarly to the proof of Proposition 2 in \cite{jhughes2021fair}.
\end{proof}

We summarize the results into the following Algorithm \ref{Alg:1}.

\begin{algorithm}[h]
\caption{Distributed Algorithm}\label{Alg:1}
\begin{algorithmic}[1]
\While {$\Pi_{x}^t$ and $\Pi_{y}^s$ not converging}
\State Compute $\Pi_{x}^t(k+1)$  using \eqref{ADMM2_eqn1}, for all $x\in\mathcal{X}_y$
\State Compute $\Pi_{y}^s(k+1)$  using \eqref{ADMM2_eqn2}, for all $y\in\mathcal{Y}_x$
\State Compute $\pi_{xy}(k+1)$  using \eqref{ADMM2_eqn3}, for all $\{x,y\}\in \mathcal{E}$
\State Compute $\alpha_{xy}(k+1)$  using \eqref{ADMM2_eqn4}, for all $\{x,y\}\in \mathcal{E}$
\EndWhile
\State \textbf{return} $\pi_{xy}(k+1)$, for all $\{x,y\}\in \mathcal{E}$
\end{algorithmic}
\end{algorithm}

\textit{\textbf{Remark:}} The developed algorithm can be interpreted as a negotiation process between each pair of connected resource owner and target node on the security resource allocation scheme, which does not require a central planner. The final outcome indicates that the negotiation reaches a consensus.

\section{Case Studies}
In this section, we corroborate the developed results by focusing on the impacts of the behavioral consideration on the security investment decision-making. We investigate a transport network consisting of two source nodes (security resource owners) and five target nodes (assets to protect). We define the loss parameter at each target as $U_1 = 12$, $U_2 = 9$, $U_3= 5$, $U_4 = 3$, and $U_5=2$. Additionally, we use Prelec's probability weighting function defined in \eqref{eqn:weighting_func} and the probability function shown in \eqref{eqn:prelec}. We set the upper bound of security resources to $\bar{q}_1 = 10$ units and $\bar{q}_2=4$ units, meaning the maximum amount of security resources the sources can invest. The utility functions of $s_{xy}$ are considered to be linear.  

\subsection{Impact of Behavioral Consideration}
We first examine how incorporating the behavioral considerations will impact the transportation of security resources from the sources to targets. This involves looking at how the parameter $\gamma$ affects the outcome of the resource allocation. 
We expect that as the parameter $\gamma$ goes to one, the amount of resources received at each target should be the same as when misperception is not considered, i.e., the objective function would be $U_x(p_x(\Pi_x))$. Fig. \ref{fig:gamma_a} corroborates this result. We can also observe that the target node with a larger value $U_x$ receives more resources, indicating that the system planner prefers to secure more valuable targets under a constrained budget. The relationship between the amount of received resources at targets follows from the order of node's value $U_x$ in Assumption \ref{assump:2}.
Additionally, as the behavioral parameter $\gamma$ goes to 1, the aggregated loss at the targets converges to the same value when misperception is not considered, as shown in Fig. \ref{fig:gamma_b}. It can also be seen that the bounded rational security investment strategy is not as efficient as the one under the accurate perception.

\begin{figure}[!t] 
\centering
\subfigure[]{
\includegraphics[width=0.7\columnwidth]{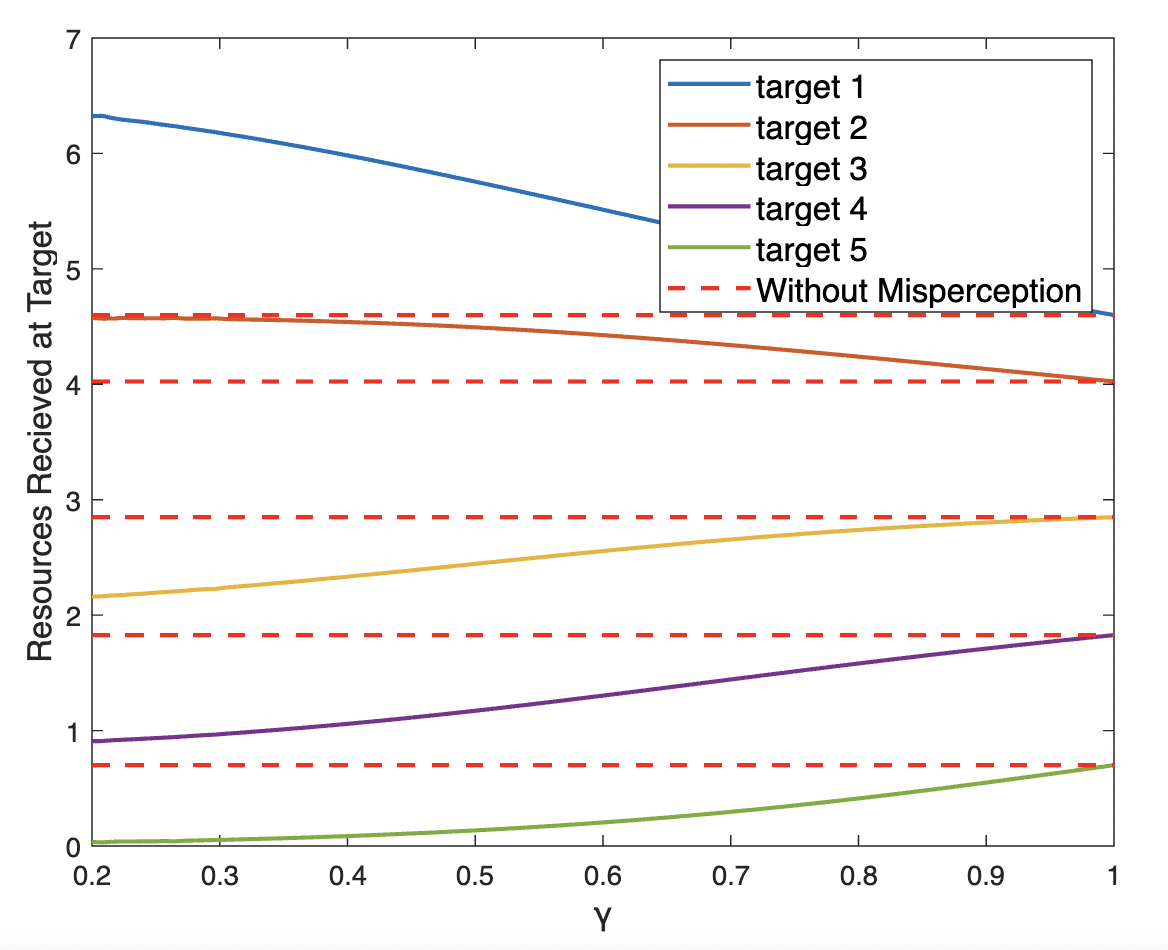}\label{fig:gamma_a}}
\subfigure[]{
\includegraphics[width=0.7\columnwidth]{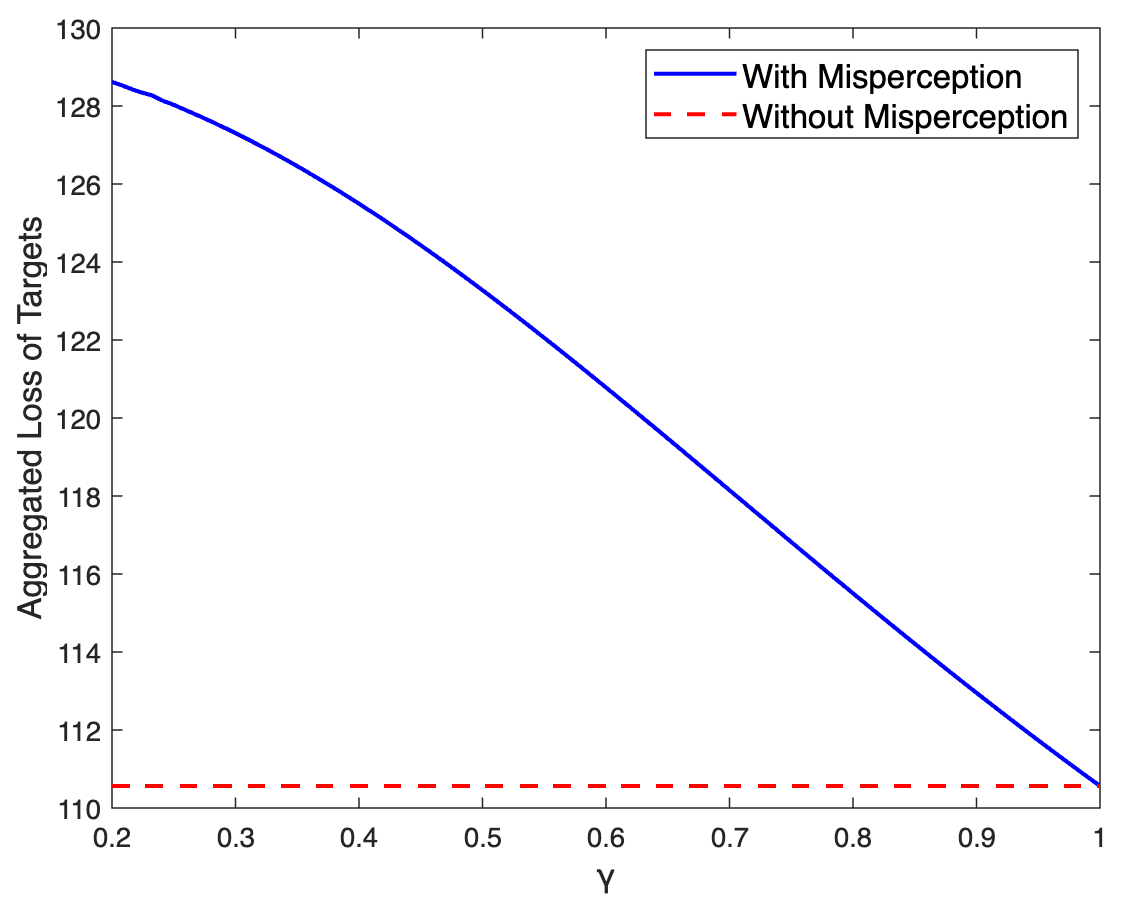}\label{fig:gamma_b}}
\caption{Impact of behavioral misperception on the security resource allocation plan. (a): Impact of $\gamma$ on the amount of resources received at each target node. The solution converges to the one without misperception as $\gamma$ goes to 1. (b): Aggregated loss of the targets with varying $\gamma$. A larger degree of misperception yields a less efficient resource transport strategy.}\vspace{-5mm}
\label{fig:gamma}
\end{figure}

\subsection{Performance of Distributed Algorithm}
We next show the performance of the proposed distributed algorithm in Algorithm \ref{Alg:1}. We use Algorithm \ref{Alg:1} to solve the optimization problem in (OP-B). Fig. \ref{fig:dist_a} shows that the distributed algorithm can efficiently converge to the centralized optimal solution. We also examine how the parameter $\tau_y$ influences the outcome of the transport plan. In the case study, $\tau_y$ is set to be the same at every source node, i.e., $\tau_y = \tau,\ \forall y\in\mathcal{Y}$. We leverage the developed distributed algorithm to compute the optimal strategy for various $\tau\in[0,1]$, and Fig. \ref{fig:dist_b} depicts the results. We can observe that when $\tau$ goes to zero in (OP-B), the aggregated loss of target nodes under the obtained strategy coincides with the one to (OP-A). This result makes sense as the utility term $s_{xy}$ no longer plays a role in (OP-B) when $\tau=0$. Another remark is that when $\tau<0.5$, the loss of targets under the solution to (OP-B) is smaller than the (OP-A) counterpart. This phenomenon is due to the system planner pays more attention to minimize the risks of targets when $\tau$ is small. As $\tau$ increases, the system planner cares more about maximizing the utility of resource owners, and thus it yields a larger loss of targets as shown in Fig. \ref{fig:dist_b}.

\begin{figure}[!t] 
\centering
\subfigure[]{\includegraphics[width=0.7\columnwidth]{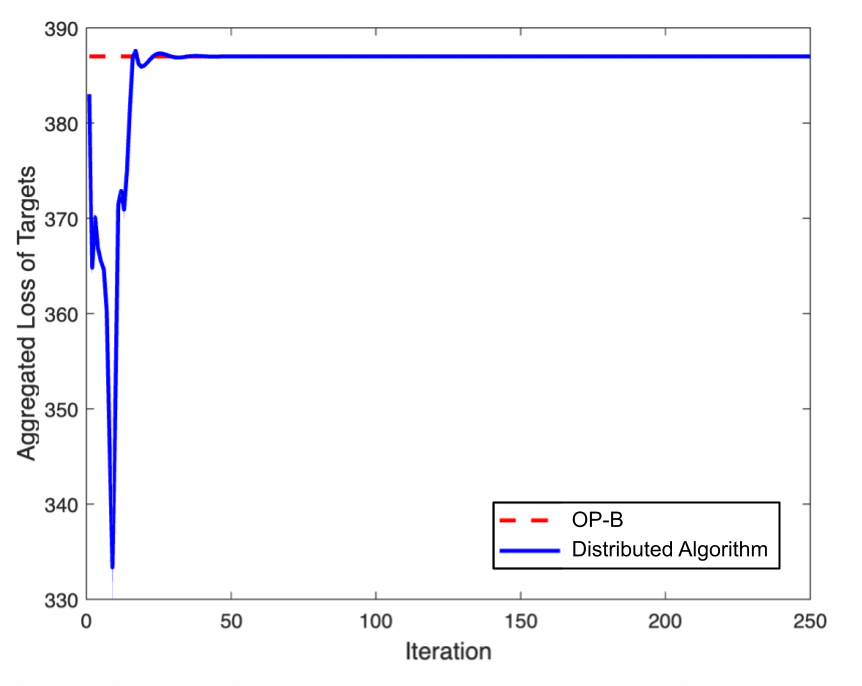}\label{fig:dist_a}}
\subfigure[]{\includegraphics[width=0.7\columnwidth]{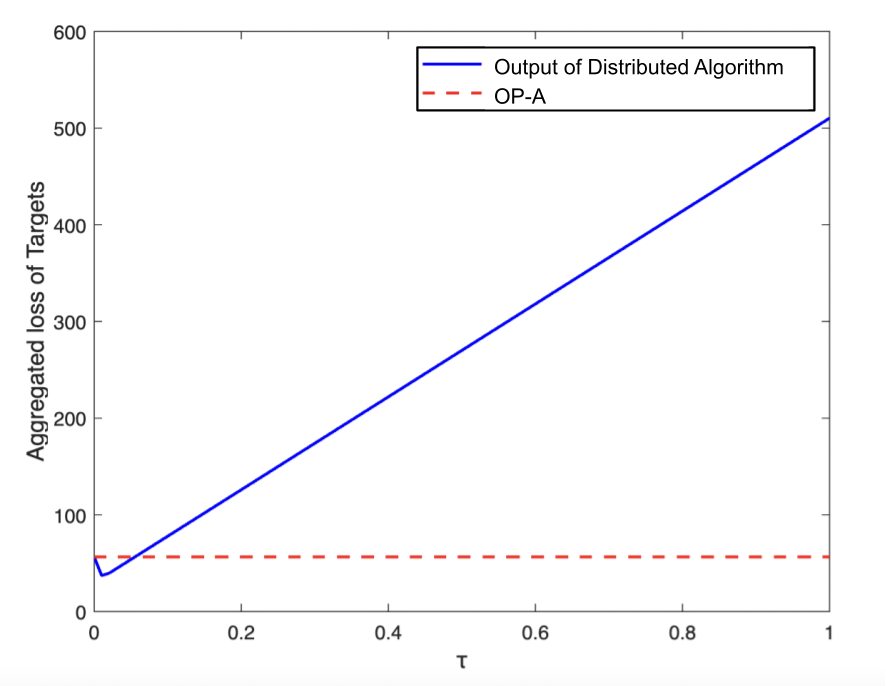}\label{fig:dist_b}}
\caption{(a): Effectiveness of the distributed algorithm \ref{Alg:1}. The distributed algorithm converges to the centralized optimal solution to problem (OP-B). (b): Impact of weighting constant $\tau$ on the aggregated loss of the targets. The solution degenerates to the one to problem (OP-A) as $\tau$ goes to 0.}\vspace{-5mm}
\label{fig:dist}
\end{figure}

\section{Conclusion}
This paper has developed a behavioral framework for security investments over a network consisting of multiple source nodes and heterogeneous target nodes. The behavioral element captures human's misperception of the successful attack probabilities at targets under a given level of security investment. The analysis has shown that the bounded-rational optimal resource allocation admits a sequential water-filling nature. In addition, we have discovered that fewer targets will receive security resources under the behavioral paradigm compared with the non-behavioral setting, revealing the sub-optimal feature of the strategy due to the behavioral misperception. We have further developed an efficient distributed algorithm to compute the resource allocation plan with a convergence guarantee which enjoys advantages when the transport network becomes enormous and complex. The case studies have corroborated that the system planner favors the higher-valued targets in bounded rational security investment, often resulting in lower-valued targets receiving smaller amounts or no resources.
Future works include extending the current framework to an adversarial setting and develop resilient security investment strategies. Another direction is to investigate dynamic resource allocation over a time-horizon with bounded rational agents.

\bibliographystyle{IEEEtran}
\bibliography{references.bib}

\appendix
\subsection{Proof of Lemma \ref{lemma:convex}}\label{lemma_proof:convex}
Here, we show that the second derivative of $w(p_x(\Pi_x))$ with respect to $\pi_{xy}$, $\forall y\in\mathcal{Y}_x$, is positive.
Using the probability function in \eqref{eqn:weighting_func}, we have
\begin{equation*}
 \begin{aligned}
 \frac{\partial^2 w(p_x(\Pi_x)) }{\partial \pi_{xy}^2} = -\gamma(\gamma-1)(-\log(p_x(\Pi_x)))^{\gamma-2}  w(p_x(\Pi_x)) \\
 \cdot \left(\frac{\partial p_x(\Pi_x)}{\partial \pi_{xy}}\Big /p_x(\Pi_x)\right)^2 + \gamma(-\log(p_x(\Pi_x)))^{\gamma-1} w(p_x(\Pi_x)) \\
 \cdot \frac{p_x(\Pi_x) \cdot \frac{\partial^2 p_x(\Pi_x)}{\partial \pi_{xy}^2} - \big(\frac{\partial p_x(\Pi_x)}{\partial \pi_{xy}}\big)^2}{(p_x(\Pi_x))^2}+ \Big(\gamma (-\log(p_x(\Pi_x)))^{\gamma-1} \\
 \cdot \frac{\partial p_x(\Pi_x)}{\partial \pi_{xy}}\Big/p_x(\Pi_x)\Big)^2 w(p_x(\Pi_x)).
 \end{aligned}
 \end{equation*}
 The first and third terms multiply out to be positive, because of Assumption \ref{assump:1}. The second term may or may not be positive depending on $(p_x(\Pi_x) \cdot \frac{\partial^2 p_x(\Pi_x)}{\partial \pi_{xy}^2} - \big(\frac{\partial p_x(\Pi_x)}{\partial \pi_{xy}}\big)^2)/(p_x(\Pi_x))^2$. If the second term term is positive than the second derivative is positive and thus the function is convex. If the term is negative we need to show that: 
 \begin{equation*}
 \begin{aligned}
 \Big[-\gamma(\gamma-1)(-\log(p_x(\Pi_x)))^{\gamma-2} \left(\frac{\partial p_x(\Pi_x)}{\partial \pi_{xy}}\Big/p_x(\Pi_x)\right)^2\\
+ \Big(\gamma (-\log(p_x(\Pi_x)))^{\gamma-1} \cdot \frac{\partial p_x(\Pi_x)}{\partial \pi_{xy}}\Big/p_x(\Pi_x)\Big)^2\Big]  \\
 \cdot w(p_x(\Pi_x)) > 
 \gamma(-\log(p_x(\Pi_x)))^{\gamma-1} w(p_x(\Pi_x))\\
 \cdot \frac{p_x(\Pi_x) \cdot \frac{\partial^2 p_x(\Pi_x)}{\partial \pi_{xy}^2} - \big(\frac{\partial p_x(\Pi_x)}{\partial \pi_{xy}}\big)^2}{(p_x(\Pi_x))^2} .
 \end{aligned}
 \end{equation*}
With cancellation and some algebraic manipulation, it is easy to see that the statement is true. Thus, the objective function is strictly convex.

\end{document}